\newtheorem{theorem}{Theorem}[section]
\newtheorem{lemme}[theorem]{Lemma}
\newtheorem{proposition}[theorem]{Proposition}
\newtheorem{corollary}[theorem]{Corollary}
\newtheorem{definition}[theorem]{Definition\rm}
\newtheorem{remark}{Remark}
\newtheorem{example}{Example}
\DeclareMathOperator*{\E}{\mathbb{E}}
\title{Risk Quantization by Magnitude and Propensity}
\author[1]{Olivier P. Faugeras\thanks{Corresponding author}}
 \author[2]{Gilles Pagès}
\date{\today}
\providecommand{\keywords}[1]
{
  \small	
  \textbf{\textbf{Keywords:}} #1
}
\affil[1]{\small Toulouse School of Economics,
 Universit\'{e} Toulouse 1 Capitole, 
  1, Esplanade de l'Universit\'{e}. Bureau T106. 31080 Toulouse Cedex 06, France. 
  \texttt{ olivier.faugeras@tse-fr.eu}}
\affil[2]{Laboratoire de Probabilités, Statistique et Modélisation, UMR 8001, Campus Pierre et Marie Curie, Sorbonne Université, case 158, 4, pl. Jussieu, F-75252 Paris Cedex 5, France.
\texttt{gilles.pages@upmc.fr}
}
\begin{document}

\maketitle
\begin{abstract}
We   propose a novel approach in the assessment of a   random risk variable $X$  by introducing   magnitude-propensity risk measures $(m_X,p_X)$. This bivariate measure intends to account for the dual aspect of risk, where the magnitudes $x$ of   $X$ tell how hign are the losses incurred, whereas the probabilities $P(X=x)$ reveal how often one has to expect to suffer such losses.  The basic idea is to simultaneously quantify both the severity $m_X$ and the propensity $p_X$ of the real-valued risk $X$. This is to be contrasted with traditional univariate risk measures, like VaR or Expected shortfall, which typically conflate both effects. 
In its simplest form, $(m_X,p_X)$ is obtained by mass transportation in Wasserstein metric of the law $P^X$ of $X$ to a two-points $\{0, m_X\}$ discrete distribution with mass $p_X$ at $m_X$. The approach can also be formulated as a constrained optimal quantization problem. 
 This allows for an informative comparison of risks on both the magnitude and propensity scales. Several examples illustrate the proposed approach.

\end{abstract}
\keywords{magnitude-propensity; risk measure; mass transportation; optimal quantization.}

\section{Introduction and outline}
The evaluation and comparison of risks are basic tasks of risk analysis. The usual view in Insurance mathematics is to evaluate an univariate risk\footnote{We take the Insurance Mathematics convention, where the non-negative values of $X$ stands for the  loss incurred} $X$ by a risk measure $\rho(X)$, which can be thought of as a one-sided deterministic univariate summary of the random variable $X$. Risks $X$, $Y$ are then compared through their respective risk measures $\rho(X)$, $\rho(Y)$.

The starting point of this paper is the basic realization that risk, as a random variable, is intrinsically a bivariate phenomenon:   magnitudes (loss amounts) occurs with  given propensities (or probabilities). Hence, it appears inescapable that a risk measure, as a single univariate quantity on the magnitude scale, will conflate both effects, thus giving a somehow blurred representation of the risk borne by the random variable $X$. 
It would then be of interest to quantify risk on both the magnitude and propensity scales.
The purpose of this paper is to propose such a simultaneous quantification.

The proposed approach is based on the following  idea: as mentioned above, a risk measure $\rho(X)$ can be viewed as a deterministic proxy of the  random risk $X$. Distributionally speaking, it can be thought of as a Dirac measure $\delta_{\rho(X)}(.)$ at $\rho(X)$: this distribution  gives full propensity one at the magnitude level $\rho(X)$. Therefore, 
in order to quantify the risk with a varying propensity $p_X$, it  makes sense to look for an  approximate of the distribution of $X$ by a mixture of two Dirac, a Dirac at location zero with weight $1-p_X$,  and a Dirac at location $m_X$, with weight $p_X$. This proxy distribution thus encodes the magnitude and propensity effect of the risk borne by $X$, through the pair $(p_X,m_X)$. Mathematically, this problem of approximating   distributions is carried out by mass transportation in Wasserstein metric. Optimal transportation to  discrete measure also corresponds to the problem of optimal quantization, well-known in the Engineering and Signal Processing literature. Hence, the proposed approach  to quantify risk on both the magnitude and propensity scales amounts to a special, constrained, optimal quantization problem. 

The outline of the paper is as follows:
In Section \ref{section1}, we motivate the magnitude~-propensity approach to risk measures. 
As a new paradigm to risk evaluation,  the proposed approach needs a careful and detailed exposition of its main idea.
We first argue and give some evidence of this magnitude-propensity duality of risk, using (intentionally) simplistic examples to make our point clear. We then show how some classical risk measures typically mixes both effects and explain why it would be desirable to quantify risk on both the magnitude and propensity scales. 
A risk measures can usually  be derived as a M-functional, i.e. as a statistical parameter which is a solution of a problem of minimization of some expected loss. We next show how these expected loss minimization problems can be embedded into (degenerate) optimal transportation problems, i.e. towards a Dirac distribution. Eventually, we propose our definition of the magnitude-propensity pair $(p_X,m_X)$, as hinted above,  and make the connection to optimal quantization.

Section \ref{section2} is a theoretical study of the basic idea. We show how to compute the optimal $(m_X,p_X)$, both as direct minimization problem and as an optimal quantization problem. To that purpose we briefly recall the main facts about quantization theory. We provide existence, characterization and (partial) uniqueness results. We discuss some basic properties of the obtained magnitude-propensity pairs and  study some example distributions. 

In Section \ref{sec:section3}, we give some numerical illustrations of the basic idea. We explain how the  optimal $(m_X,p_X)$ gives rise to magnitude-propensity plots, which allow for an informative comparison of risks on both the  magnitude and propensity scale. We show how to do such comparisons  for distributions with or without explicit formulas for $(m_X,p_X)$. Empirically, the optimal magnitude-propensity pair can be computed by a fixed point algorithm, which is akin to Lloyd's algorithm in optimal quantization. The methodology is illustrated on a real data set of insurance losses. 

Eventually, we give a conclusion of the main results in Section \ref{sec:conclusion} and add some perspective for further research about possible variants and extensions of the basic approach. 

\section{The magnitude-propensity approach to risk measures}\label{section1}
\subsection{Risk is  intrinsically a bivariate propensity-magnitude random phenomenon}\label{section1-1}

As a primary concept of Insurance Mathematics, it is somehow difficult to give a precise, deductive definition of risk in terms of more primitive concepts. This explains why, e.g.  \cite{Novak2012} p. 224, states that   ``There is currently no consensus concerning the meaning of the word risk''. Hence, risk is often given an implicit or semantic definition in textbooks, as in \cite{Novak2012} p. 191:  ``Risk is a possibility of an undesirable event. Though such an event is rare, its magnitude can be devastating.'' 

The latter description of risk encompasses the duality magnitude-propensity of random variables: intuitively, a claim can be ``risky'' because losses may happen often, i.e. with a ``high'' propensity,  possibly with (relatively) ``small'' magnitudes,  or because a catastrophe of very large magnitude may happen, albeit with a (relatively) ``low'' propensity. We mention that this dual nature of risk is explicit in the Engineering literature (see e.g. \cite{bedford_cooke_2001}),  where it is  summarized by the semantic formula ``risk=uncertainty+damage'' in \cite{kaplan1981quantitative}.

Risk due to high magnitudes and risk due to high propensities manifests itself on several levels: intrinsically for a single risk variable, relatively in the comparison of two risks, and in the stochastic order itself. The best way to  illustrate our point is to give examples stripped to their simplest form, i.e. for two-points discrete measures.  Hence, we have the following intentionally simplistic examples:
\begin{itemize}
    \item \textbf{Intrinsic magnitude-propensity aspect for a single random risk:}
    
   	Let $X_1$ s.t. $P(X_1=1000)=0.1$, $P(X_1=0)=0.9$, while $X_2$ is s.t. $P(X_2=200)=0.5$, $P(X_2=0)=0.5$.  $X_1$ and $X_2$ have the same mean $\E [X_1]=\E [X_2]=100$.  $X_1$  has a ``high'' magnitude risk $1000$ of ``low'' propensity $0.1$ while $X_2$ has a ``small'' magnitude risk $200$ of ``high'' propensity $0.5$, see Figure \ref{pic-fig1}.

\begin{figure}[H]
	\centering
	\includegraphics[scale=0.6]{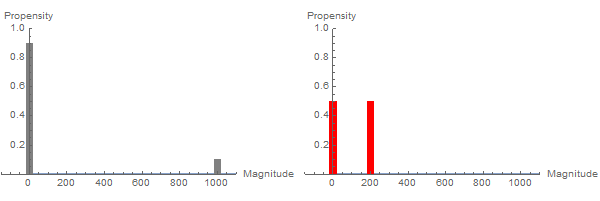}
	\caption{The dual nature of risk: $X_1$ has high magnitude risk (left) vs $X_2$ has high propensity risk (right). Yet, $\E[X_1]=\E[X_2]$.}
	\label{pic-fig1}
\end{figure}

    \item \textbf{Relative magnitude-propensity effect for the comparison of risks:}
    
    When one considers a pair  $(X_1,X_2)$ of risk variables and  regards risk as a relative property of one r.v. $X_1$ w.r.t. the other r.v. $X_2$, the magnitude-propensity duality   manifests itself in the comparison of risks and in the ordering structure of probability measures. This is illustrated in the following (also intentionally simplistic) examples:
	
	\begin{example}
	\label{example1}
	   Let $X_1$ s.t. $P(X_1=0)=0.9$, $P(X_1=100)=0.1$, while $X_2$ is s.t. $P(X_2= 0)=0.9$, $P(X_2=1000)=0.1$. Then, $X_1$ and $X_2$ have same propensities, but $X_2$ is more risky than $X_1$ due to a difference in magnitudes, see Figure \ref{pic-fig2}.
	\end{example}
	   
	\begin{example}
	\label{example2}
	 	 	Let $X_1$ s.t. $P(X_1=0)=0.95$, $P(X_1=1000)=0.05$, while $X_2$ is s.t. $P(X_2=0)=0.5$, $P(X_2=1000)=0.5$, see Figure \ref{pic-fig3}. Then, $X_1$ and $X_2$ have same magnitudes, but $X_2$ is more risky than $X_1$ due to a difference in propensities, see Figure \ref{pic-fig3}.
	\end{example}
	
		\begin{figure}[H]
	\centering
	\includegraphics[scale=0.6]{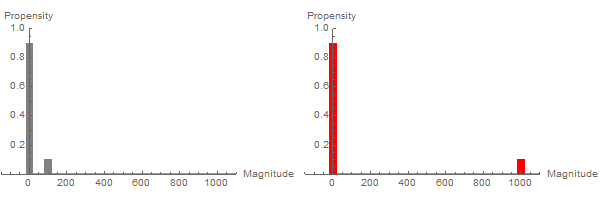}
	\caption{Comparison of risks for Example \ref{example1}: $X_2$ (right) is riskier than $X_1$ due to a difference of  magnitudes}
	\label{pic-fig2}
    \end{figure}
    
	\begin{figure}[H]
	\centering
	\includegraphics[scale=0.6]{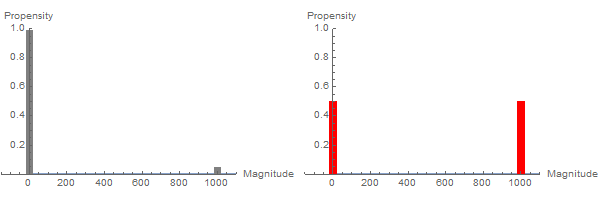}
	\caption{Comparison of risks for Example \ref{example2}: $X_2$ (right) is riskier than $X_1$ due to a difference of  propensities}
	\label{pic-fig3}
    \end{figure}

\newpage
    \item \textbf{Magnitude-propensity effect in the stochastic order:}
    
    In both cases of Examples \ref{example1} and \ref{example2}, one has 
\begin{equation}
    F_{X_1}(x)\ge F_{X_2}(x), \quad\forall x\in\mathbb R^+,
    \label{eq:stochorder}
\end{equation} 
so that 
$X_1\prec_{st} X_2$, where $\prec_{st}$ stands for the usual stochastic order. Yet,  (\ref{eq:stochorder})
does not distinguish between   a horizontal shift of c.d.f.s due to a difference in  magnitudes (Figure \ref{pic-fig4} left), and a vertical shift of c.d.f.s due to a difference of propensities (Figure \ref{pic-fig4} right).
\begin{figure}[H]
  \includegraphics[scale=0.6]{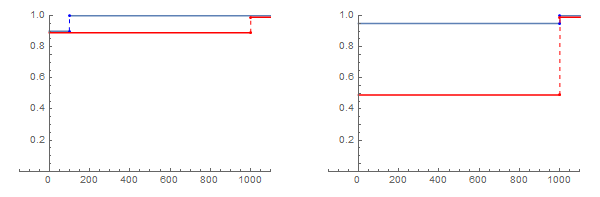}
	\caption{C.d.f.s of $X_1$ (blue), $X_2$ (red). Left: Example \ref{example1}, difference of magnitudes (horizontal shift). Right: Example \ref{example2}, differences of propensities (vertical shift).}
	\label{pic-fig4}
    \end{figure}
\end{itemize}

\begin{remark}
\label{rem:Galois}
From the mathematical standpoint, let us remark that 
this magnitude-propensity duality of risk 
  can be formalized mathematically by the concept of  a Galois connection between  the magnitude and probabilities spaces, considered as two ordered spaces,  $(\mathbb R,\le)$ and $([0,1],\le)$.
  Indeed, denote by  $F_X$, resp. $Q_X$, the cumulative distribution function, resp. the quantile function, of $X$. Then, $(F_X,Q_X)$ forms a Galois connection   between the magnitudes space $(\mathbb R,\le)$ and the propensity space  $([0,1],\le)$, i.e. for all $x\in\mathbb R$ and $t\in (0,1)$,
	$$
	t\le F_X(x) \Leftrightarrow Q_X(t)\le x.
	$$
	See \cite{FR2017} for  details and a generalization of the concept of quantile to the multivariate case.
\end{remark}

\subsection{Risk measures as  univariate deterministic proxies for a random risk}\label{section1-2}

The  main paradigm to the  assessment of risk in Insurance and Financial Mathematics is based on the use of risk measures 
\begin{equation}
\rho:\mathscr X\to \mathbb R^+,    
\label{eq:risk measure}
\end{equation}
 where $\mathscr X$ is a space of non-negative measurable random variables modeling an insurance claim. These risk measures quantify risk on a magnitude scale, by converting a random loss $X$ into a deterministic certainty equivalent $\rho(X)$. The latter  can then be used for ordering different risks,  and for decision making purposes, like setting the premium for covering the risk $X$, see e.g.  \cite{r2013}.
The resulting risk measure should typically satisfy some desirable properties (coherent risk measure), like translation invariance, monotonicity,  etc., see e.g. \cite{Artzner1999, Follmer}.

The last decades have seen a multiplication of risk measures. Among the numerous approaches encountered in the literature, let us mention the classical premium calculation principles based on probabilistic models (see e.g. \cite{Mikosch}, \cite{Asmussen2010}, \cite{Buhlmann1996}),  the axiomatic approach where premium principles are subject to a set of desirable properties (see \cite{Artzner1999}), the  abstract/functional analytic approach where risk measures are derived from an acceptance set and a set of scenario measures (see \cite{Follmer}),   distortion-based measures \cite{wang_1996}, and eventually the approach to risk excess measures induced by hemi-metrics (see  \cite{Faugeras2018}). These numerous approaches have lead to considerable debate on the pros and cons  of the  risk measures available in the literature, see e.g. \cite{academicresponse}. 

From the magnitude-propensity point of view,
the duality of risk is reflected in its measurement, i.e. in the risk measures themselves.
Some risk measures focus more on the propensity aspect of risk, while some others focus more on the magnitude one, and most risk measures mix both.
Let us illustrate this point with the following comparison of three well-known risk measures:
\begin{itemize}
	\item The Value-At-Risk, defined as the left $\alpha$-quantile of $X$, 
	\begin{equation}
	VaR_\alpha(X):=\inf\{x: P(X\le x)\ge \alpha \}, \quad 0<\alpha<1
	\label{VAR-def}
	\end{equation}
	encodes a propensity into a  magnitude, by setting the $VaR_\alpha(X)$ as the utmost-left $\alpha$-quantile. (Note that some authors define Var as the utmost-right quantile). Value at Risk only controls the probability of a loss, it does not capture the size of such a loss if it occurs. 
	\item The opposite extreme is, for $X$ an essentially bounded random variable, the essential supremum, 
	$$
	\rho_\infty(X):=\mbox{ess}\sup X,
	$$
	which quantifies the maximum magnitude of the loss, but gives no information on the probabilities.
	\item The Expected Shortfall, (also called the Conditional Value at Risk, see \cite{UR2000}),
	\begin{equation}
	ES_\alpha(X):=\E [X|X\ge VaR_\alpha(X)],
	\label{ES-def}
	\end{equation}
	clearly  mixes  the two aspects magnitude-propensity of the risk by computing a weighted average over a threshold.
\end{itemize}
In general, one wants to know both when/how often a catastrophe may occur, and also what is the size/extent of the loss one has to face. This suggests that 
reducing risk $X$ to a single proxy on the magnitude scale as a univariate risk measure $\rho(X)$ is somehow inadequate to account for the dual nature of risk.
This idea that one numerical quantity cannot hedge against risk has already been evoked beforehand in the literature, see e.g. \cite{RK}. Let us also remark that in their criticism of the VaR measure,  the authors of the academic response to the Basel 3.5 framework  are implicitly interested in these dual propensity and magnitude aspects of risk (see \cite{academicresponse} p.27  ``Question W1: VaR says nothing concerning the what-if question: Given we encounter a high loss, what can be said about its magnitude?'').  
It thus would be of considerable practical interest to quantify risk on both the magnitude and propensity scales. 
This is the purpose of this paper.

\begin{remark}[On elicitability]
     It has been argued in the literature that elicitability is also a desirable property for risk measures (see \cite{academicresponse}). Elicitability is a concept derived from point forecasting. Roughly speaking, in order that a point forecast, derived from a statistical functional,  be consistent with their evaluation by averaging over the past, the statistical functional must be written as a (strict) M-functional,   see \cite{Gneiting2011} for a precise definition. It is known that VaR is not a coherent risk measure, but is elicitable, while the expected shortfall (ES) is a coherent law invariant risk measure, but is not elicitable.  ES is jointly elicitable with VaR, see \cite{FisslerZiegel}.  
     
     This consideration of elicitability suggests the use of bivariate risk measure (ES,VaR), i.e. of combining magnitude and propensity type univariate risk measures to obtain something that is both coherent and elicitable. This is another supplementary motivation for arguing that one should switch to bivariate risk measures for the study of univariate risks.
\end{remark}
\begin{remark}[On parametrized risk measures]
\label{rem:choice}
For  risk measures depending on a parameter, like $\alpha$ in $VaR_\alpha(X)$ or $ES_\alpha(X)$, arise the practical issue  of choosing the ``right'' parameter value $\alpha$ for correctly representing the risk by a single numerical quantity. Common practice is to hedge against a ``rare event'', i.e. to take, say,  $\alpha=0.9$, $0.95$ or $0.99$. 
 
 $VaR$ and $ES$ give in fact curves, $\alpha\to VaR_\alpha(X)$
 and $\alpha\to ES_\alpha(X)$. These curves, under mild conditions, determine the distribution of $X$. Hence, they give the same information as the c.d.f. $F_X$. They are just another possible analytical characterization of the distribution  of $X$. 
 A complete assessment of the magnitude and propensity effects of a risk $X$ can be visualised by plotting its c.d.f. $F_X$,  or more conveniently its survival function $\overline{F}_X$, possibly on a log-log scale. This is the classical approach in the Engineering literature, see  \cite{kaplan1981quantitative}, where it is argued that ``a single number is not a big 
 enough concept to communicate risk.  It takes a whole curve''.
 
 However, it becomes difficult to compare entire curves. Therefore, it is natural to look for a summary of this  distribution-determining curve to  as few as possible numerical quantities.   This is the path favoured   in Insurance and Financial Mathematics with   risk measures.
   The  magnitude-propensity measure to be introduced below, can thus be viewed as a middle-ground between univariate risk measures of Insurance Mathematics and the full curve approach in Engineering.  
\end{remark}


\subsection{M-statistical functionals can be obtained from  mass transportation to a degenerate distribution}\label{section1-3}

The following discussion gives the key insight for defining a magnitude-propensity risk measure as an optimal transportation problem.
Let us recall that the Monge-Kantorovich optimal transportation problem aims at
finding a joint measure $P^{(X,Y)}$ on, say, the product measurable space $(\mathbb R\times \mathbb R,\mathcal B(\mathbb R^2)$, with prescribed marginals $(P^X, P^Y)$, which is the
solution of the optimisation problem:
\begin{equation}
\mathcal T_c(P^X, P^Y) := \inf_{P^{(X,Y)}\in \mathcal P(P^X,P^Y)} \int c(x, y) P^{(X,Y)}(dx,dy),
\label{KR-def}
\end{equation}
where $c: \mathbb R\times \mathbb R\to \mathbb R^+$ is a cost function and the infimum is on the set $\mathcal P(P^X,P^Y)$ of joint distributions $P^{(X,Y)}$ with given marginals $P^X, P^Y$. Informally, mass at
$x$ of $P^X$ is transported to $y$, according to the conditional distribution $P(dy|x)$
of the optimal transportation plan $P^{(X,Y)}$, in order to recover $P^Y$
while minimising the average cost of transportation $\int c(x, y) P^{(X,Y)}(dx,dy)$. Under regularity conditions, the optimal transportation plan is induced by a (Monge) mapping $T$, viz. $P^{X,Y}=P^{X,T(X)}$. See \cite{rachev1998mass}, \cite{Villani2003}, \cite{villani2008optimal}, \cite{santambrogio2015optimal} for book-length treatment on
the subject.

When $P^Y$ is degenerate, i.e. $P_Y=\delta_m$ with $m\in\mathbb R$, 	 then  $\mathcal P(P^X,\delta_m)$ reduces to the singleton product measure  $\mathcal P(P^X,\delta_m)=\{P^X(dx)\times \delta_m(dy)\}$.  (\ref{KR-def}) thus simplifies as  the expected cost between $X\sim P^X$ and a fixed point $m$,  
\begin{equation}
\mathcal T_c(P^X,\delta_m)=\int c(x,m) P^X(dx) = \E c(X,m) \label{KR-degenerate}.
\end{equation}
Therefore, minimizing the transportation cost (\ref{KR-def}) over the set $\mathcal D:=\{\delta_m(dy), m\in\mathbb R\}$ of Dirac measures is equivalent to minimizing the expected cost (\ref{KR-degenerate}) over $m\in\mathbb{R}$,
\begin{equation}
\mathcal T_c(P^X, \mathcal D)=\inf_{P^{Y}\in \mathcal D} \mathcal T_c(P^X, P^Y)=\inf_{m\in \mathbb R } \E c(X,m).
\label{KR-equiv}
\end{equation}
 In particular, 
 \begin{itemize}
 	\item for  	$c(x,y)=(x-y)^2$, (\ref{KR-def}) is the squared $L_2-$Wasserstein metric $W_2$
 	 	and (\ref{KR-equiv}) writes as the variance, 
 	$$W_2^2(P^X,\mathcal D)=\inf_{P^{Y}\in \mathcal D} W_2^2(P^X,P^Y)=\inf_{m\in \mathbb R } \E (X-m)^2=Var(X),$$
     	and is obviously minimised for the mean $m=EX$. In addition, when $P^X$ is replaced by the conditional law of $X$ given $X\ge VaR_\alpha(X)$, one obtains the Expected shortfall (\ref{ES-def}).
 	\item  For the $L_1$ distance, $c(x,y)=|x-y|$, one gets the median.
 	\item For the asymmetric cost $c(x,y)=(x-y)\alpha \mathds 1_{x-y\ge 0}+(y-x)(1-\alpha) \mathds 1_{y-x> 0}$, with $0<\alpha<1$, 
 	one obtains the (left)-$\alpha$-quantile, $m=q_\alpha(X)$ (see e.g. Koenker and Basset), that is to say the Value-At-Risk (\ref{VAR-def}).
 	
 	\item For  $c(x,y)=y+\frac{(x-y) \mathds1_{x\ge y}}{1-\alpha}$, one gets  simultaneously the Expected Shortfall and the Value-at-risk, when $\E [X]<\infty$: the optimal value in (\ref{KR-equiv}) is the Expected Shortfall while the argmin gives the Value-At-Risk, 
 	see \cite{UR2000}, \cite{rockafellar2002conditional}.
 	
 	\item And so on for other statistical functionals. See also \cite{Faugeras2018} for optimal transportation induced by cost functions which are hemi-metrics encoding an order.
 \end{itemize}   
The above discussion shows that the statistical functionals and risk measures,  which can be expressed as an M-estimator solving (\ref{KR-equiv}) for a suitable cost function, can be regarded as being obtained from a special mass transportation problem towards a family of degenerate Dirac distribution $\delta_m$.

\subsection{The magnitude-propensity $(m_X,p_X)$ approach to measuring risk} 
The optimal transportation view on risk measures of Section \ref{section1-3} suggests that the limitations of the risk measures of  Section \ref{section1-2} come from the fact that these univariate functionals can be viewed as being obtained by mass transportation from the $P^X$ measure to a degenerate Dirac $\delta_m$ measure: the latter distribution only bears a magnitude $m$ with full propensity. Hence, the magnitude and propensity aspects of $P^X$ are mixed and encoded in the sole magnitude $m$ of the Dirac destination measure. 

It therefore becomes natural to suggest  a mass transportation approach to risk measures, as in (\ref{KR-equiv}), but with the target Dirac distribution $\delta_m$ replaced by a two-points distribution $P^Y$,
\begin{equation}
P^Y=(1-p)\delta_0+p\delta_m. \label{def-B}
\end{equation}
The latter distribution encodes both the magnitude and propensity aspects of risk: a loss of magnitude $m$ occurs with probability $p$ (and no loss occurs with probability $1-p$). 
We can therefore  define of the basic idea of the paper as follows:
\begin{definition}
	Let $\mathcal A_0:=\{P^Y=(1-p)\delta_0+p\delta_m,\quad p\in(0,1), m\in\mathbb R^+\}$ be the set of such two-point distributions (\ref{def-B}). For $X\sim P^X$ with $\E[X^2]<\infty$, the bivariate magnitude-propensity risk measure
	$(m_X, p_X)$
	is obtained by minimizing   the Wasserstein $W_2$ distance from  $P^X$ to $\mathcal A_0$,   
	\begin{equation}
	(m_X, p_X)=\arg\inf_{P^Y\in\mathcal A_0} W_2(P^X,P^Y) \label{def}.
	\end{equation}
\end{definition}
In other words, the risk $X$ is approximated in Wasserstein metric by a proxy $Y\sim B_0$, which bears a loss of magnitude $m_X$, occurring with a probability $p_X$, and no loss with probability $1-p_X$. We exclude the values $p=0$, $p=1$ and $m=0$ in the family $\mathcal A_0$, in order to obtain non degenerate measures.

This approach of approximating a distribution by a discrete one corresponds to the well-known problem of optimal quantization, see \cite{GL2000}. The latter is itself related to   $k-$means clustering, see \cite{ Pollard1982a}. Here, the main difference is that one location of the discrete distribution is constrained to be $0$.  This point mass at zero encodes the absence of  loss and thus the point mass $p$ at $m>0$ encodes loss.
$P^Y\in\mathcal{ A}_0$  is characterized by the two parameters $(m,p)$, and quantization to $\mathcal A_0$ gives the minimal way to represent the magnitude and propensity effect borne by $X$.


\section{Theoretical analysis}\label{section2}

The theoretical study of the optimal $(m_X,p_X)$ magnitude-propensity pair in   (\ref{def}) can be done by two main approaches:  either as an optimal mass transportation problem or as an optimal quantization problem. The more refined results are obtained by the latter approach. However, we first make use of the former  to quickly obtain a characterisation of the optimal magnitude-propensity pair.
	
\subsection{Computation of $(m_X,p_X)$ by the optimal transportation approach}
A direct optimization approach can be used to characterize the $(m_X,p_X)$ by using the explicit form of the Wasserstein metric in dimension one: denote by $Q_X$ the quantile function of $P^X$,
$$
Q_X(t):=\inf\{x:F_X\ge t\}, \quad 0<t<1.
$$
It is then   well-known (see e.g. \cite{rachev1998mass}) that  
\begin{equation}
    W_2^2(P^X,P^Y)=\int_0^1 \left(Q_X(t)-Q_Y(t)\right)^2dt,
\label{eq:explicit W2}
\end{equation}
for univariate $P^X, P^Y$ with finite variance.
For $Y  \in\mathcal A_0$, its quantile function writes $Q_Y(t)=m\mathds 1_{1-p<t\le 1}$. Plugging the later in (\ref{eq:explicit W2}) and optimizing gives the following characterization result.

\subsubsection{Characterization by direct optimisation}
In the remainder of the paper, we simplify notations and will denote simply by $F$, $Q$, $f$ the c.d.f., quantile function and density (if it exists) of $X$.
\begin{proposition}
\label{prop}
Let $X$ s.t. $\E[X^2]<\infty$.
	\begin{itemize}
	    \item [i)] Necessary conditions: 
if $(m_X,p_X)$ is a local minimum of  (\ref{def}), then it satisfies the following system of equations,
	\begin{align}
	m_X&=2Q(1-p_X), \label{eq}\\
	m_X&=\frac{\int_{1-p_X}^{1} Q(t)dt}{p_X}.\label{eq2}
	\end{align}
	
    \item [ii)] Sufficiency condition: if $P^X$ is absolutely continuous with continuous density $f$, then $(m_X,p_X)$ is optimal if $f(Q(1-p_X))>0$ and 
    \begin{equation}
    \frac{2p_X}{f(Q(1-p_X))}-Q(1-p_X)>0.
    \label{eq:sufficientcond}
    \end{equation}
    
	\end{itemize}

\end{proposition}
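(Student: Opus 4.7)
The plan is to reduce the bivariate problem (\ref{def}) to an explicit two-variable calculus problem using the quantile representation of the Wasserstein metric, then extract the necessary conditions from first-order stationarity and the sufficient condition from positive definiteness of the Hessian.

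First I would substitute $Q_Y(t) = m\,\mathds 1_{\{t>1-p\}}$ into (\ref{eq:explicit W2}) to obtain the objective
$$G(m,p) := W_2^2(P^X,P^Y) = \int_0^1 Q(t)^2\,dt - 2m\int_{1-p}^1 Q(t)\,dt + m^2 p,$$
which depends smoothly on $m$ and absolutely continuously on $p$ on the open domain $(0,\infty)\times(0,1)$ (since $Q$ is nondecreasing and locally integrable). Minimizing $W_2$ over $\mathcal A_0$ is equivalent to minimizing $G$.

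For part (i), I would compute the gradient. For $m$, one finds
$$\partial_m G(m,p) = 2mp - 2\int_{1-p}^1 Q(t)\,dt;$$
setting this to zero at a local minimum $(m_X,p_X)$ and dividing by $2p_X>0$ gives (\ref{eq2}). For $p$, since $Q$ is monotone and hence continuous outside a countable set, at any $p_X$ where $Q$ is continuous at $1-p_X$ the fundamental theorem of calculus yields
$$\partial_p G(m,p) = m^2 - 2m\,Q(1-p),$$
whose vanishing, combined with the requirement $m_X>0$ built into $\mathcal A_0$, produces (\ref{eq}). At an exceptional $p_X$ where $Q$ has a jump at $1-p_X$, the same conclusion follows from inspecting the left and right one-sided derivatives, both of which must bracket zero at a local minimum.

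For part (ii), the continuity of $f$ allows $Q$ to be differentiated with $Q'(t) = 1/f(Q(t))$ wherever $f(Q(t))>0$. A straightforward computation then gives
$$\partial^2_{mm} G = 2p,\qquad \partial^2_{mp} G = 2\bigl(m-Q(1-p)\bigr),\qquad \partial^2_{pp} G = \frac{2m}{f(Q(1-p))}.$$
Evaluating at the critical point, where $m_X = 2Q(1-p_X)$ and hence $m_X - Q(1-p_X) = Q(1-p_X)$, I obtain
$$\det H(m_X,p_X) = \frac{4 p_X m_X}{f(Q(1-p_X))} - 4\,Q(1-p_X)^2 = 4\,Q(1-p_X)\left(\frac{2 p_X}{f(Q(1-p_X))}-Q(1-p_X)\right).$$
Hypothesis (\ref{eq:sufficientcond}) together with $Q(1-p_X)>0$ (which holds automatically since otherwise $m_X=0\notin\mathcal A_0$) yields $\det H > 0$, while $\partial^2_{mm}G = 2p_X > 0$, so the Hessian is positive definite and $(m_X,p_X)$ is a strict local minimizer of $G$, hence of $W_2$.

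The main obstacle is purely a regularity one: differentiating $p\mapsto \int_{1-p}^{1} Q(t)\,dt$ at the critical value, and, for the sufficient condition, differentiating $Q$ itself in order to compute $\partial^2_{pp}G$. These are handled, respectively, by the monotonicity of $Q$ (guaranteeing continuity at $1-p_X$ outside a countable exceptional set) and by the continuity and positivity of $f$ at $Q(1-p_X)$.
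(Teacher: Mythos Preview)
Your proof is correct and follows essentially the same approach as the paper: expand $W_2^2$ via the quantile formula to obtain the two-variable function $G(m,p)=\E[X^2]+m^2p-2m\int_{1-p}^1 Q(t)\,dt$, set the gradient to zero for part~(i), and check positive definiteness of the Hessian for part~(ii), arriving at the identical determinant expression $4Q(1-p_X)\bigl(2p_X/f(Q(1-p_X))-Q(1-p_X)\bigr)$. If anything, you are slightly more careful than the paper about the differentiability of $p\mapsto\int_{1-p}^1 Q(t)\,dt$ in the general case; the paper simply asserts that the objective is differentiable and proceeds.
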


\begin{proof}
	\begin{itemize}
	    \item [i)]
	
	 The squared Wasserstein distance between $P^X$ and $P^Y\in\mathcal A_0$ writes
	\begin{eqnarray*}
	 W_2^2(P^X,P^Y)&=&\int_0^{1-p}(Q(t))^2dt+\int_{1-p}^1(Q(t)-m)^2dt\\
	 &=&E[X^2]+m^2p-2m\int_{1-p}^1Q(t)dt\\
	 &:=&\psi(m,p)
	\end{eqnarray*}
		$\psi$ is differentiable and any optimal magnitude-propensity $(m_X,p_X)$ solving (\ref{def}) must satisfy the first order conditions
		\begin{eqnarray*}
		 \begin{cases}
			\frac{\partial \psi(m_X,p_X)}{\partial m}=0\\
			\frac{\partial \psi(m_X,p_X)}{\partial p}=0
		\end{cases}
	\Leftrightarrow \begin{cases}
		-2\int_{1-p_X}^{1} Q(t)+2m_Xp_X=0\\
		-2m_XQ(1-p_X)+m^2_X=0.
	\end{cases}
		\end{eqnarray*}
	For $m_X\neq 0$ and $p_X\neq 0$, one gets
	 (\ref{eq}) and (\ref{eq2}).
\item [ii)] if $P^X$ has density $f$ such that $f(Q(p))>0$, then $Q$ is differentiable with derivative the quantile-density $q_X(p)=Q(p)'=\frac{1}{f(Q(p))}$. Then $\psi$ is twice differentiable with Hessian matrix
$$\begin{bmatrix}
\frac{\partial^2 \psi}{\partial m^2}&\frac{\partial^2 \psi}{\partial m\partial p}\\
\frac{\partial^2 \psi}{\partial m\partial p}&\frac{\partial^2 \psi}{\partial p^2}
\end{bmatrix}
=\begin{bmatrix} 
a & b\\
b&c
\end{bmatrix}
$$
with $a=2p$, $b=2m-Q(1-p)$, $c=\frac{2m}{f(Q(1-p))}$.
The Hessian is positive-definite at the critical point $(m_X,p_X)$ if $a>0$ and $ac-b^2>0$.
The latter condition writes 
$$
ac-b^2=\frac{4p_Xm_X}{f(Q(1-p_X))}-4\left(m-Q(1-p_X)\right)^2>0
$$
With $(m_X,p_X)$ satisfying (\ref{eq}), (\ref{eq2}), the condition writes
$$
4Q(1-p_X)\left( \frac{2p_X}{f(Q(1-p_X))}-Q(1-p_X)\right)>0,
$$
which is (\ref{eq:sufficientcond}).

\end{itemize}
	
\end{proof}

\begin{remark}
\begin{enumerate}
    \item If $1-p_X$ is in the range of $F$ (in particular if $F$ is continuous), then,   (\ref{eq}) writes 
    $$F(m_X/2) =1-p_X,$$ and by Exercise 3.3 in \cite{Shorack2000} p. 113,  (\ref{eq2}) writes as \begin{equation*}
    m_X=\frac{\E[X \mathds 1_{X> m_X/2}]}{1-F(m_X/2)}=\E[X| X> m_X/2].
    \end{equation*}
    This equation will be derived in the general case    from the optimal quantization viewpoint, see Theorem \ref{thm:existence} and Remark \ref{rem:general} below.
    
    \item Setting $a:=Q(1-p_X)$, the sufficiency condition (\ref{eq:sufficientcond}) writes also
    $$af(a)<2(1-F(a)).$$
    In view of the fact that
    $\E  [X]=\int_0^\infty tf(t)=\int_0^\infty (1-F(t))dt<\infty$, the latter condition has to be understood as a condition on the tail decrease of the density. It can also be expressed as a condition on the failure rate (or hazard function), as
    $$h(a):=\frac{f(a)}{1-F(a)}<\frac{2}{a}.$$
    A general existence   result will be given in Theorem \ref{thm:existence}, by quantization methods.

\end{enumerate}
    
\end{remark}

\subsubsection{Examples}
We illustrate the formulas obtained in   Proposition \ref{prop} on the following examples:
\begin{example}[Uniform distribution]
\label{ex:uniform}
	For $X\sim U_{[0,a]}$, i.e. $Q(p)=ap$, (\ref{eq}) and (\ref{eq2}) give  $p_X=2/3$ and $m_X=2a/3$.
\end{example}

\begin{example}[Exponential distribution]
\label{ex:exponential}	For $X\sim \mbox{Exp}(\lambda)$, with $\lambda >0$, the memorylessness property yields that $\E [X|X>a]=a+\lambda^{-1}$. Hence, the optimal threshold is $a_X=\lambda^{-1}=\E [X]$ and  one has  $p_X=e^{-2}\approx 0.135$ and $m_X=\frac{2}{\lambda}=2 \E[X]$.
\end{example}
It is noteworthy that, in both examples,  $p_X$ does is fixed and does not depend on the parameter ($a$, resp. $\lambda$) of the distribution.

\begin{example}[Pareto distribution]
\label{ex:pareto}
    One considers the following one-parameter (version) of the Pareto Distribution, $X\sim Pa(\theta)$, defined by $P(X>x)=(1+x)^{-\theta}$, for $x\ge 0$. We assume that $\theta>2$, so that $\E[ X^2]<\infty$.  One has that 
    $$\E [X|X>a]=\frac{\theta}{\theta-1}(1+a)-1.
    $$
    Using the characterisation (\ref{eq:condition theorem}), one finds that the optimal threshold is
    $a=\frac{1}{\theta-2}
    $.
    Hence, (\ref{eq:sol theorem}) gives  
    $$m_X=\frac{2}{\theta-2},\quad  p_X=\left(\frac{\theta-2}{\theta-1}\right)^\theta.$$
\end{example}

For some other distributions, one may not have a closed form expression.

\subsection{A review of Optimal Quantization}
\label{sec:quantization}
Additional insight is gained by viewing the basic approach (\ref{def}) as a constrained optimal quantization problem. We first summarize the basic facts and terminology of (unconstrained) optimal quantization theory. The following  can be regarded as a quick introduction to the field.

Optimal  quantization originates from the engineering and signal processing literature (see \cite{Lloyd}, \cite{gersho1992vector}). It aims at  optimally discretizing  a continuous (stationary) signal in view of its transmission. It was developed originally for analog-to-digital conversion,  compression, pattern recognition.

Following \cite{gersho1992vector} and \cite{GL2000}, a  $N-$vector quantizer on $(\mathbb R^d,||.||)$   is a mapping $T:\mathbb R^d\mapsto \{x_1,\ldots,x_N\}$, where $\{x_1,\ldots, x_N\}$ is a codebook of size $N$. Thus, associated with $T$ is a partition $\{A_i\}$, with $A_i=\{x\in\mathbb R^d: T(x)=x_i\}$, of the input space so that a quantizer writes
$$T(x)=\sum_{1=1}^N x_i \mathds 1_{A_i}(x), \quad x\in \mathbb R^d$$
and is determined by the pairs $\{(x_i,A_i), i=1,\ldots n\}$.
An $N-$optimal quantizer for a distribution $P^X$ is a $N-$quantizer
which minimises the mean squared error (or distortion)
$$
D(T; P^X):=\inf_{T} \E(X-T(X))^2.
$$
Equivalently, it can be shown (\cite{Pollard1982a}, \cite{GL2000}) that an optimal quantizer is a Monge map  minimising the Wasserstein metric $W_2(P^X, P^Y)$ between $P^X$ and $P^{Y}$, where $P^Y$ is a discrete measure with $N$ points.

It is known, see \cite{gersho1992vector}, that given the codebook, the optimal partition is given by the Voronoi partition,
$$
A_i:=\{x\in\mathbb R^d: ||x-x_i||\le ||x-x_j||, j\neq i \}.
$$
Conversely, given a partition, the optimal codebook is given by the centers (centroids),
$$
y_i=\E(X|X\in A_i).
$$
These properties are the basis of Lloyd's iterative algorithm. This  also explains why optimal quantizers are sometimes defined directly by their codebook and corresponding Voronoi partition, see e.g. \cite{Pages2018}. As a consequence, the distortion writes as a sole function of the centers,   as 
\begin{equation}
D(T; P^X)=D_N(x_1,\ldots, x_N):= \E \min_{x_i} ||X-x_i||^2
\label{eq:distortion centers}
\end{equation}

If the support of $P^X$ has at least $N$ elements, existence of an $N$-optimal quantizer follows from the fact that $(x_1,\ldots,x_N)\to \sqrt{D_N(x_1,\ldots, x_N)}$ is 1-Lipshitz and a non-trivial compacity argument, see e.g. Lemma 8 in \cite{Pollard1982a}, Theorem 5.1 in \cite{Pages2018} or Theorem 4.12 in \cite{GL2000}. In dimension one, a known sufficient condition for uniqueness of the $N$-optimal quantizer is that $P^X$ be absolutely continuous with a log-concave density $f$. By Proposition 6.6 in \cite{Pages2018}, if $x^N:=(x_1,\ldots, x_N)$ has pairwise distinct component and $P^X\left( \cup_{1\le i\le N} \partial A_i\right)=0$, then $D_N$ is continuously differentiable with gradient 
$\nabla D_N=\begin{bmatrix}\frac{\partial D_N}{\partial x_i}\end{bmatrix}$ with \begin{equation}
    \frac{\partial D_N}{\partial x_i} (x^N)=\E[2(x_i-X)\mathds 1_{X\in A_i}].
    \label{eq:derivative}
\end{equation} 

These calculations are the basis for a stochastic gradient descent based algorithm known as Competitive Learning Vector Quantization, see   \cite{Pages2018}.  In the one-dimensional case,  explicit expressions
of the gradient can be written in terms of the c.d.f. $F$ and of the cumulative first moment function $K(x):=\E[X\mathds 1_{X\le x}]$ as
$$
\frac{\partial D_N}{\partial x_i}(x^N) =2x_i\left[F(x_{i+\frac{1}{2}}) -F(x_{i-\frac{1}{2}})\right] -2\left[K(x_{i+\frac{1}{2}})-K(x_{i-\frac{1}{2}}) \right],
$$ 
where $x_{i+\frac{1}{2}}=\frac{x_{i+1}+x_i}{2}$ corresponds to the boundaries of the Voronoi cells.
Additional formulas for the Hessian are available (see \cite{Pages2018} p. 155). These   formulas allow for a Newton-Raphson zero search procedure.

\subsection{Existence and characterization of  $(m_X,p_X)$ by constrained optimal quantization}
We can now tackle the study of $(m_X,p_X)$ from the optimal quantization viewpoint.
One thus  introduces the constrained two-points quantizer with centers $\{x_0,x_1\}:=\{0,m\}$, i.e. as a mapping $T:\mathbb R^+\mapsto \{0,m\}$ with $T(x)=m \mathds 1_{x\ge a}$, where $a$ is a threshold to determinate. Then, the optimal quantization problem with constrained knot at zero writes
$$
\inf_{a,m\in\mathbb R^+} \E[(X-T(X))^2].
$$
By the results of the previous section (see (\ref{eq:distortion centers})), one already knows that  $a=m/2$, i.e. that the Voronoi regions write $A_0=\{x: 0\le x\le m/2\}$ and $A_1=\{x: x\ge m/2\}$.
As a consequence, and in view of (\ref{eq:distortion centers}) the distortion/objective function writes as a sole function of the magnitude $m$, as
\begin{equation}
    L(m):=\E[X^2\wedge (X-m)^2].
    \label{eq:objective}
\end{equation}
We first study the differentiability properties of the objective function (\ref{eq:objective}) in the next Section.

\subsubsection{B-differentiability properties of the objective function}
 For the general $N$-points quantization problem,  the square root of the distortion function is $1-$Lipshitz, see \cite{Pages2018} p. 136. Hence, it has a derivative a.e. by Rademacher's Theorem. Therefore, $L$ is differentiable a.e. 
In addition, since the integrand $H:x\to X^2\wedge (X-x)^2$ is piecewise differentiable, it is easy to show that it has directional derivatives everywhere. A convenient tool in the setting of optimisation of piecewise smooth functions is the concept of Bouligand derivative (B-derivative), see  \cite{Robinson1987}  and \cite{Scholtes}. It drops the requirement of linearity of the differential, represents a first-order approximation and allows to have a single-valued notion of differential. We give below a simplified definition for functions of one variable.
 \begin{definition}
A function  $f:\mathbb R \to \mathbb R$ is Bouligand differentiable (B-differentiable) at $x_0$ if there exists a positive homogeneous function $\nabla^B f(x_0):\mathbb R\to \mathbb R$ s.t.
\begin{equation}
 f(x_0+v)=f(x_0) +\nabla^B f(x_0)(v) +o(v),\quad \forall v\in\mathbb R.
 \label{eq:def B-derivative}   
\end{equation}
 \end{definition}

For the study of the B-differentiability of the objective function $L$, we first give a lemma on the differentiability of its integrand.
\begin{lemme}[B-derivative of a min function]
\label{lemme:B-derivative min}
Let $H(x)=\min(h_1(x),h_2(x))$, where $h_1,h_2:\mathbb R\to \mathbb R$ are differentiable functions.
Then $H$ is B-differentiable, with B-derivative given by
\begin{enumerate}
    \item [i)] if $x$ is s.t. $h_1(x)<h_2(x)$,  then $\nabla^B H(x)(v)=h_1'(x)v$, and conversely, if $h_1(x)>h_2(x)$, then $\nabla^B H(x)(v)=h_2'(x)v$.
    \item [ii)] if $x$ is s.t. $H(x)=h_1(x)=h_2(x)$, then $\nabla^B H(x)(v)= \min(h_1'(x)v,h_2'(x)v)$.
\end{enumerate}\end{lemme}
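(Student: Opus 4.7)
The plan is to handle the two cases separately, using continuity of $h_1,h_2$ in case (i) and a direct first-order expansion combined with the $1$-Lipschitz property of the $\min$ function in case (ii).

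For case (i), suppose $h_1(x)<h_2(x)$. Since $h_1,h_2$ are continuous, there is an open neighborhood $U$ of $x$ on which $h_1<h_2$ pointwise, so $H=h_1$ on $U$. Hence $H$ is (Fréchet) differentiable at $x$ with derivative $h_1'(x)$, which trivially yields the B-differential expansion \eqref{eq:def B-derivative} with $\nabla^B H(x)(v)=h_1'(x)v$. The symmetric argument handles $h_1(x)>h_2(x)$.

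For case (ii), assume $h_1(x)=h_2(x)=H(x)$. First I would write the ordinary first-order expansions
\begin{equation*}
h_i(x+v)=h_i(x)+h_i'(x)v+r_i(v),\qquad r_i(v)=o(v),\ i=1,2,
\end{equation*}
and, using $h_1(x)=h_2(x)$, conclude
\begin{equation*}
H(x+v)-H(x)=\min\!\bigl(h_1'(x)v+r_1(v),\,h_2'(x)v+r_2(v)\bigr).
\end{equation*}
The key observation is that $\min:\mathbb R^2\to\mathbb R$ is $1$-Lipschitz for the $\ell^\infty$ norm, so
\begin{equation*}
\bigl|\min\!\bigl(h_1'(x)v+r_1(v),h_2'(x)v+r_2(v)\bigr)-\min\!\bigl(h_1'(x)v,h_2'(x)v\bigr)\bigr|\le \max(|r_1(v)|,|r_2(v)|)=o(v).
\end{equation*}
This yields the expansion \eqref{eq:def B-derivative} with $\nabla^B H(x)(v)=\min(h_1'(x)v,h_2'(x)v)$. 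It remains to check that this candidate is positive homogeneous: for $\lambda>0$,
\begin{equation*}
\min(h_1'(x)\lambda v,h_2'(x)\lambda v)=\lambda\min(h_1'(x)v,h_2'(x)v),
\end{equation*}
which is immediate.

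There is no real obstacle here; the only subtlety is to not claim (Fréchet) differentiability at the crossing point, where $\nabla^B H(x)$ is in general only piecewise linear in $v$ (it coincides with $h_1'(x)v$ or $h_2'(x)v$ depending on the sign pattern of $v$ and of $h_1'(x)-h_2'(x)$), so one must stay within the B-derivative framework.
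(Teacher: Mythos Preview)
Your proof is correct and follows essentially the same approach as the paper: case~(i) reduces to ordinary differentiability on a neighborhood by continuity, and case~(ii) uses the first-order expansions of $h_1,h_2$ together with the $1$-Lipschitz property of $\min$ in the $\ell^\infty$ norm (the paper writes this as the inequality $|\min(a,b)-\min(c,d)|\le\max(|a-c|,|b-d|)$), followed by checking positive homogeneity.
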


\begin{proof}
\begin{itemize}
    \item [i)] Assume w.l.o.g. that $h_1(x)<h_2(x)$, so that $H(x)=h_1(x)$. Let us show that $H(x)=h_1(x)$ remains true on a neighborhood of $x$. Set $d=h_2(x)-h_1(x)>0$. Let $0<\varepsilon_1<d$, $0<\varepsilon_2<d-\varepsilon_1$. By continuity of $h_1$ and $h_2$ at $x$, there exists $\delta>0$ s.t. $\forall |v|<\delta$,
    \begin{align}
        h_1(x+v)&\le h_1(x)+\varepsilon_1\label{1}\\
        h_2(x+v)&\ge h_2(x)-\varepsilon_2\label{2}
    \end{align}
    $h_1(x)+\varepsilon_1=h_2(x)+\varepsilon_1-d$, therefore (\ref{1}) and (\ref{2}) give
    $$
    h_1(x+v)\le h_2(x)+\varepsilon_1-d< h_2(x)-\varepsilon_2\le h_2(x+v).
    $$
    Hence, $H(x+v)=h_1(x+v)$ for all $|v|< \delta$. Thus, the Bouligand derivative of $H$ at $x$ reduces to  the  classical derivative $h_1'(x)$ of $h_1$. 
    \item [ii)] One has
    \begin{align}
        h_1(x+v)&=h_1(x)+h_1'(x)v +o(v)\nonumber\\
        h_2(x+v)&=h_2(x)+h_2'(x)v+o(v)\label{eq:Taylor}
    \end{align}
    Since $H(x)=h_1(x)=h_2(x)$, 
    $$
    H(x+v)-H(x)=\min(h_1(x+v)-h_1(x),h_2(x+v)-h_2(x)).
    $$
    We then use the inequality 
    $$
    |\min (a,b)-\min (c,d)|\le\max (|a-c|,|b-d|),
    $$
    applied to $a=h_1(x+v)-h_1(x)$, $b=h_2(x+v)-h_2(x)$, $c=h_1'(x)v$, $d=h_2'(x)v$, to deduce that
    $$
    |H(x+v)-H(x)-\min(h_1'(x)v,h_2'(x)v)|\le \max(|o(v)|,|o(v)|)=|v| \max(|o(1)|,|o(1)|),
    $$
    viz. $H$ is B-differentiable at $x$, with B-derivative
    $$
    \nabla^BH(x)(v)=\min(h_1'(x)v,h_2'(x)v),
    $$
    as the latter expression is positively homogeneous.
    \end{itemize}
\end{proof}

\begin{corollary}[B-Differentiability of the objective function]
\label{cor:B-derivative objective}
$L$ is B-differentiable on $x\ge 0$, with B-derivative given by
$$
\nabla^B L(0)(v)=-2\E [X]v\mathds 1_{v>0},
$$
and, for $x>0$, 
\begin{equation}
\nabla^B L(x)(v)=2v\E[(x-X)\mathds 1_{X>x/2}] +P(X=x/2)\min (0,xv).
\label{eq:B derivative objective}
\end{equation}
\end{corollary}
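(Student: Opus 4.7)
\emph{Approach.} The plan is to reduce to Lemma \ref{lemme:B-derivative min} pointwise in $\omega$, then to transfer the result to $L = \E H$ by dominated convergence. For each $\omega$, one writes the integrand $H(x) := X^2 \wedge (X-x)^2$ as $\min(h_1(x), h_2(x))$ with $h_1(x) = X^2$ constant in $x$ and $h_2(x) = (X-x)^2$. Then $h_1' \equiv 0$ and $h_2'(x) = -2(X-x)$, while the coincidence set $\{h_1 = h_2\}$ reduces to $\{X = x/2\}$ since $X \geq 0$.

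\emph{Pointwise B-derivative.} For $x > 0$, partitioning $\Omega$ into $\{X < x/2\}$, $\{X > x/2\}$ and $\{X = x/2\}$ and applying respectively case (i) and case (ii) of Lemma \ref{lemme:B-derivative min} yields pointwise B-derivatives $0$, $2v(x-X)$ and $\min(0, xv)$. Aggregating, one obtains
\[
\nabla^B H(x)(v) = 2v(x-X)\mathds 1_{X > x/2} + \min(0, xv)\mathds 1_{X = x/2}.
\]
For $x = 0$, one has $h_1(0) = h_2(0) = X^2$ a.s., so case (ii) applies everywhere and gives $\nabla^B H(0)(v) = \min(0, -2Xv)$, which simplifies to $-2Xv \mathds 1_{v > 0}$ using $X \geq 0$.

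\emph{Interchange and main obstacle.} It remains to interchange $\nabla^B$ and $\E$, i.e.\ to apply dominated convergence to the Bouligand remainder $R(v) := H(x+v) - H(x) - \nabla^B H(x)(v)$, which tends a.s.\ to $o(|v|)$ by the pointwise statement. Since $H$ is the minimum of two $C^1$ functions with derivatives bounded by $2(|X| + |x| + |v|)$, both $|H(x+v) - H(x)|/|v|$ and $|\nabla^B H(x)(v)|/|v|$ are majorized, for $|v| \le 1$, by $2(|X| + |x| + 1)$, which lies in $L^1$ thanks to $\E X^2 < \infty$. This uniform integrable bound on the difference quotient is the only real technical point; the rest is bookkeeping, namely taking the expectation of $\mathds 1_{X = x/2}\min(0, xv)$ to produce the $P(X = x/2)\min(0, xv)$ term of (\ref{eq:B derivative objective}), and noting that positive homogeneity in $v$ of the limit is inherited from that of the pointwise B-derivative.
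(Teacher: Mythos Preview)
Your proof is correct and follows essentially the same route as the paper: compute the pointwise B-derivative of the integrand $H(x)=X^2\wedge(X-x)^2$ via Lemma~\ref{lemme:B-derivative min}, then pass the B-derivative under the expectation. The only difference is in justifying the interchange: the paper argues that the exact quadratic Taylor remainders of $h_1,h_2$ give a remainder for $H$ bounded uniformly in $X$ by $|v|^2$, whereas you use dominated convergence with the $L^1$ majorant $2(|X|+|x|+1)$ on the difference quotients; both lead to the stated formulas.
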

\begin{proof}
Applying Lemma \ref{lemme:B-derivative min} to the integrand function $H(x)=X^2\wedge (X-x)^2$, i.e. with $h_1(x)= X$, $h_2(x)=  (X-x)^2$, gives  the B-derivative of the integrand function: almost surely, for $v\in\mathbb R$,
\begin{equation*}
    \nabla^BH(x)(v)=\begin{cases}
    0& x<0\\
    \min(0,-2Xv) &x=0\\
    2(x-X)v& 0<x<2X\\
    \min(0,2Xv) &x=2X\\
    0& x>2X
    \end{cases}
\end{equation*}
In addition, the  second order terms in the Taylor expansions (\ref{eq:Taylor})   of $h_1$ and $h_2$ (which are exact) do not depend on $X$, so that reasoning as in \ref{lemme:B-derivative min}, it is easy to see that one gets a remainder term in (\ref{eq:def B-derivative}) for $H$ which does not depend on $X$, with at most linear growth:
$$
H(x+v)=H(x)+\nabla^BH(x)(v)+v\epsilon(v),  
$$
a.s., with $|\epsilon(v)|\le |v|$.
Therefore, if one sets 
$$
\nabla^B L(x)(v):=\E[\nabla^BH(x)(v)],
$$
then, by linearity,
\begin{align*}
    \frac{|L(x+v)-L(x)-\nabla^B L(x)(v)|}{|v|}\le |v|\to  0,
\end{align*}
as $|v|\to 0$.
 By integration,
$$
\nabla^B L(0)(v)=\begin{cases}
-2\E [X]v& v>0\\
0& v\le 0
\end{cases}
$$
 and for $x>0$,
\begin{align*}
\nabla^B L(x)(v)&=\int \left(2(x-t)\mathds 1_{0<x<2t}+\min(0,2tv)\mathds 1_{x=2t}\right) P^X(dt)\\
&=\E[2(x-X)\mathds 1_{X>x/2}]v+P(X=x/2)\min (0,xv).
\end{align*}
 $v\to \nabla^B L(x)(v)$ is positively homogeneous, for $x\ge 0$. Hence, $L$ is B-differentiable.

\end{proof}

\subsubsection{Main existence and characterization result}
With these tools, we can now give the main existence and characterization result of the optimal magnitude-propensity pair.
\begin{theorem}
\label{thm:existence}
\begin{enumerate}
    \item [i)] If   $\E [ X^2]<\infty$ and the support of $P^X$ contains at least two points, then there exists a magnitude-propensity pair $(p_X,m_X)$ minimizing (\ref{def}).
    \item [ii)] An optimal magnitude-propensity pair $(p_X,m_X)$ is characterized as a solution of the equation
    \begin{equation}
    2a=\E[X|X> a],\quad a>0,
    \label{eq:condition theorem}
    \end{equation}
    and then 
    \begin{equation}
        m_X=2a, \quad p_X=P(X>m/2).
        \label{eq:sol theorem}
    \end{equation}
    In addition, $P^X(\{a\})=0$.
\end{enumerate}
\end{theorem}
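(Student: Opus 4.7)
My plan is to reduce the problem to a one-dimensional minimization of the scalar function $L(m)=\E[X^{2}\wedge (X-m)^{2}]$ introduced in (\ref{eq:objective}), then exploit the B-derivative formula of Corollary \ref{cor:B-derivative objective} to extract the first-order conditions. The whole proof rests on two facts: (a) $L$ is continuous with $L(0)=\lim_{m\to\infty}L(m)=\E[X^{2}]$, and (b) at any interior minimizer the B-derivative must be non-negative in both directions $v>0$ and $v<0$.

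For existence (part i), I would first argue that $L$ is Lipschitz (equivalently, continuous), which is immediate from the fact that $m\mapsto W_{2}(P^{X},(1-p)\delta_{0}+p\delta_{m})$ is $1$-Lipschitz in the underlying metric and the B-differentiability already established. Next I would check the boundary behavior: $L(0)=\E[X^{2}]$ trivially, and $L(m)\to \E[X^{2}]$ as $m\to\infty$ by dominated convergence (the integrand is bounded by $X^{2}$ and $(X-m)^{2}>X^{2}$ eventually for each fixed $X$). Finally, I need $L$ to be strictly smaller than $\E[X^{2}]$ somewhere. Splitting on $\{X\le m/2\}$ and $\{X>m/2\}$ gives
\begin{equation*}
\E[X^{2}]-L(m)=m\,\E\bigl[(2X-m)\mathds{1}_{X>m/2}\bigr],
\end{equation*}
which is strictly positive for small enough $m>0$ whenever $P(X>0)>0$ — a consequence of $P^{X}$ having at least two points in its support on $\mathbb{R}^{+}$. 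Combining these three observations, $L$ attains its minimum at some $m_{X}\in(0,\infty)$, which yields the optimal pair via $p_{X}=P(X>m_{X}/2)$.

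For the characterization (part ii), I would apply Corollary \ref{cor:B-derivative objective} at the interior minimizer $m=m_{X}>0$. The necessary optimality condition $\nabla^{B}L(m)(v)\ge 0$ for all $v\in\mathbb{R}$ reads
\begin{equation*}
2v\,\E\bigl[(m-X)\mathds{1}_{X>m/2}\bigr]+P(X=m/2)\min(0,mv)\ge 0,\qquad \forall v\in\mathbb{R}.
\end{equation*}
Testing with $v>0$ gives $\E[(m-X)\mathds{1}_{X>m/2}]\ge 0$; testing with $v<0$ and dividing by $v$ gives $2\E[(m-X)\mathds{1}_{X>m/2}]+mP(X=m/2)\le 0$. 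Since $m>0$, chaining these two inequalities forces simultaneously
\begin{equation*}
\E\bigl[(m-X)\mathds{1}_{X>m/2}\bigr]=0\qquad\text{and}\qquad P(X=m/2)=0.
\end{equation*}
The first equality rewrites as $m\,P(X>m/2)=\E[X\mathds{1}_{X>m/2}]$, i.e.\ $m=\E[X\mid X>m/2]$. Setting $a=m_{X}/2$ and noting $P(X=a)=0$ yields exactly (\ref{eq:condition theorem}) together with $m_{X}=2a$ and $p_{X}=P(X>a)$, which is (\ref{eq:sol theorem}).

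The delicate point is ensuring the minimizer lies in the interior and thus that the B-derivative condition applies with $v$ of either sign. The boundary value $m=0$ is excluded because $\nabla^{B}L(0)(v)=-2\E[X]v\mathds{1}_{v>0}<0$ for $v>0$ when $\E[X]>0$ (guaranteed again by the two-point support hypothesis), so $L$ strictly decreases from $0$ and the minimum cannot sit at $0$. Handling the atom term $P(X=m/2)\min(0,mv)$ — which is the real novelty over a smooth setting — is what pins down the non-obvious conclusion $P^{X}(\{a\})=0$, and this is the step I would expect to take the most care to present cleanly.
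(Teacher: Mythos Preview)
Your proof is correct and follows essentially the same route as the paper: both reduce to the one-dimensional objective $L(m)$, rule out $m=0$ via $\nabla^{B}L(0)(v)<0$ for $v>0$, and extract the two equalities $\E[(m_X-X)\mathds 1_{X>m_X/2}]=0$ and $P(X=m_X/2)=0$ by testing the B-derivative condition $\nabla^{B}L(m_X)(v)\ge 0$ with $v$ of each sign. The only cosmetic difference is in the existence step: the paper invokes compactness of the sublevel sets of $L$ and Weierstrass, whereas you argue directly from continuity plus the boundary behaviour $L(0)=\lim_{m\to\infty}L(m)=\E[X^{2}]$ together with $L(m)<\E[X^{2}]$ for small $m$; your chaining of the two inequalities in part ii) is in fact slightly more careful than the paper's presentation, which jumps straight to the equalities.
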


\begin{proof}
    \begin{enumerate}
        \item [i)] 
             By the results of the Section \ref{sec:quantization}, the objective function $L$ of (\ref{eq:objective}) is locally Lipshitz hence continuous and its lower level sets are compact, see e.g. Lemma 8 in  \cite{Pollard1982a}. Hence,  the general existence result follows from Weierstrass theorem. See also  Exercise 3 p. 139 in \cite{Pages2018}.
        \item [ii)] 

        Since $\nabla^B L(.)(.)$ gives a first order approximation of $L$, a necessary condition for $m_X$ to be a minimizer is that $\nabla^B L(m_X)(v)\ge 0$, $\forall v\in\mathbb R$. For $m_X=0 $, $\nabla^B L(0)(v)=-2\E[X]<0$ for $v>0$, therefore $0$ cannot be a critical point. For $m_X>0$ and $v>0$, the first order condition writes 
        \begin{equation}
            \E[(m_X-X)\mathds 1_{X>m_X/2}]=0.
            \label{eq:first1}
        \end{equation}
        For $m_X>0$, $v<0$, it reduces to 
        \begin{equation}
            2\E[(m_X-X)\mathds 1_{X>m_X/2}]+m_XP(X=m_X/2)=0.
            \label{eq:first2}
        \end{equation}
        Plugging (\ref{eq:first1}) into (\ref{eq:first2}) implies that $P(X=m_X/2)=0$ and yields  (\ref{eq:condition theorem}), upon reparametrizing by $a=m_X/2$.
    \end{enumerate}
\end{proof}

\begin{remark}
\label{rem:general}
    \item [i)] We hopefully obtain the same formula   between (\ref{eq:condition theorem}) of  Theorem \ref{thm:existence} and (\ref{eq}), (\ref{eq2}) of Proposition \ref{prop}. 
    The quantization approach allows to obtain a general existence  result.  
    \item [ii)]  $P^X$ does not charge the   optimal threshold $a=m_X/2$, even if $P^X$ is not absolutely continuous. By (\ref{eq:B derivative objective}), this imply that $\nabla^B L(m_X)(v) $ is linear in $v$, that is to say,   $L$ is differentiable (in the classical sense) at the optimal $m_X$, with
    $$L'(m_X)=2\E[(m_X-X)\mathds 1_{X>m_X/2}]=2\E[(m_X-X)\mathds 1_{X\ge m_X/2}].
    $$
\end{remark}

\subsubsection{Uniqueness}
The matter of uniqueness is notoriously a difficult topic in optimal quantization, see \cite{GL2000}, \cite{Pages2018}. 
We provide below in Theorem \ref{thm:log concave full support} a set of sufficiency conditions  for the uniqueness of the magnitude-propensity pair. The main condition is the log-concavity of $x\to x^3f(x)$, which is fulfilled when the density $f$ is itself log-concave. It is a condition similar to  the classical condition   of \cite{Trushkin82} and \cite{Kieffer1983}   for the uniqueness in (unconstrained) optimal quantization. 

\begin{theorem}
\label{thm:log concave full support}
Let $P^X$ be an absolutely continuous distribution on $\mathbb R_+$,  with  density $f$, $0\in\texttt{supp}(P^X)$ and $\E[X^2]<\infty$. Assume 
\begin{enumerate}
    \item [i)] $\tilde f:x\to x^3f(x)$ is strictly log-concave on $\mathbb R_+$;
    \item [ii)]   $\lim_{x\to 0} xf(x)=0$. 
\end{enumerate}  
Then,  the optimal magnitude-propensity pair is unique.  
\end{theorem}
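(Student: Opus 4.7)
The plan is to reduce the uniqueness of $(m_X,p_X)$ to that of a positive zero of a scalar function, and then to exploit the strict log-concavity of $\tilde f$ to control the shape of that function. By Theorem~\ref{thm:existence}, any optimal pair arises from $a=m_X/2>0$ solving
\[
\Psi(a) := \int_a^\infty (t-2a)\, f(t)\, dt = 0,
\]
with $p_X = 1-F(a_X)$ determined from $a_X$. Existence being already granted, it suffices to prove that $\Psi$ has at most one zero on $(0,\infty)$. A direct computation yields
\[
\Psi'(a) = a f(a) - 2(1-F(a)), \qquad \Psi''(a) = 3 f(a) + a f'(a) = \frac{\tilde f'(a)}{a^2},
\]
the second identity being the crux of the argument and clarifying why the weight $x^3$ appears in hypothesis~(i).

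Strict log-concavity of $\tilde f$ makes it strictly unimodal. Combined with the integrability $\int_0^\infty \tilde f(x)/x\, dx = \mathbb E[X^2]<\infty$, this unimodality forces $\tilde f(x)\to 0$ as $x\to\infty$ (an eventually decreasing log-concave function with positive limit would violate this integrability) and places the mode of $\tilde f$ at some $x_0\in(0,\infty)$. Hence $\Psi''>0$ on $(0,x_0)$ and $\Psi''<0$ on $(x_0,\infty)$, so $\Psi'$ is strictly unimodal. Using assumption~(ii), $\Psi'(0^+)=-2$; using $a f(a) = \tilde f(a)/a^2\to 0$ and $1-F(a)\to 0$, $\Psi'(+\infty)=0$. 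Since $\Psi'$ strictly decreases to $0$ on $(x_0,\infty)$, it is positive there, and combined with $\Psi'(0^+)<0$, strict unimodality forces a unique zero $a_1\in(0,x_0)$ of $\Psi'$, with $\Psi'<0$ on $(0,a_1)$ and $\Psi'>0$ on $(a_1,\infty)$.

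Consequently $\Psi$ strictly decreases on $(0,a_1)$ and strictly increases on $(a_1,\infty)$. Since $\Psi(0^+)=\mathbb E[X]>0$ and $\Psi(+\infty)=0$, strict monotonicity on $(a_1,\infty)$ with limit $0$ forces $\Psi<0$ there; the intermediate value theorem then yields a unique zero $a_X\in(0,a_1)$, whence $(m_X,p_X)=(2a_X,1-F(a_X))$ is unique. The main obstacle I anticipate is the asymptotic lemma $\tilde f(x)\to 0$, which requires coupling hypothesis~(i) with the second-moment hypothesis; hypothesis~(ii) then enters cleanly in the boundary evaluation $\Psi'(0^+)=-2$, and everything else reduces to a careful sign analysis of $\Psi'$ and $\Psi$.
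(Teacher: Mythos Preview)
Your approach coincides with the paper's: both study the same scalar function (your $\Psi$ is the paper's $-\Phi$) and control its shape via the log-concavity of $\tilde f$. Your identity $\Psi''(a)=\tilde f'(a)/a^2$ is precisely the paper's formula $\Phi''_r(y)=-yf(y)\zeta(y)$ with $\zeta=(\ln\tilde f)'_r$, and the concluding sign analysis of $\Psi'$ and $\Psi$ mirrors the paper's.

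There is, however, a genuine gap: you implicitly assume unbounded support. The hypotheses permit $\mathrm{supp}(P^X)=[0,b]$ with $b<\infty$ (e.g.\ $X$ uniform on $[0,1]$, where $\tilde f(x)=x^3$ on $(0,1)$), and two of your steps then fail. First, $\tilde f$ may be strictly increasing throughout $(0,b)$ with its mode at the boundary $x_0=b$; your claims ``$\Psi''<0$ on $(x_0,\infty)$'' and ``$\Psi'$ strictly decreases to $0$ on $(x_0,\infty)$, hence is positive there'' are then false, since $\Psi''\equiv 0$ and $\Psi'\equiv 0$ on $(b,\infty)$. Second, and relatedly, $\Psi\equiv 0$ on $[b,\infty)$, so the assertion ``$\Psi$ has at most one zero on $(0,\infty)$'' is literally false; you must restrict to zeros $a$ with $P(X>a)>0$. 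The paper repairs all of this by a case split on $\lim_{y\to b}\zeta(y)\gtrless 0$ and on whether $f(b-)>0$, invoking the existence theorem (Theorem~\ref{thm:existence}) to rule out the degenerate subcases; you can patch your argument the same way. A minor additional point: log-concavity of $\tilde f$ guarantees only one-sided derivatives of $f$, so $\Psi''$ should be understood in the right-handed sense, as the paper does with $f'_r$.
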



\begin{proof}
Prior to the proof, let us make the following elementary remarks. Being log-concave, the function $\tilde f:x\to x^3f(x)$ has a limit in $[0,\infty)$ as $x\to 0$. Thus, we may set $\tilde f(0)=\lim_{x\to 0} x^3f(x)=0$, by assumption ii). Consequently, $I:=\{x>0:\tilde f(x)>0\}$ is an interval of $\mathbb R_+$ of the form $(0,b]$ or $(0,b)$, with $b\in (0,\infty]$, since $0\in\texttt{supp}(P^X)$, and $f(x)=\tilde f(x)/x^3$ is continuous on $(0,b)$. Moreover, still by log-concavity, $\tilde f$ has a left-handed limit $\tilde f(b-)$ at $b$. Finally, $f$ is continuous on $(0,b)$, with left-handed limit $f(b-)\ge 0$.

The optimal magnitude $m_X$ is s.t. $y_*=m_X/2$ is a zero of  the function 
\begin{equation}
    \Phi(y)=2y \overline{F}(y)-\overline{K}(y),
    \label{eq:Phi}
\end{equation}
where we have set $\overline{F}(y)=\int_y^b f(t)dt$ and $\overline{K}(y)=\int_y^b tf(t)dt$.  
Moreover,  since $\tilde f$ is log-concave, $\ln f$ has right and left-handed derivatives on $(0,b)$, and, by monotonicity, these one-sided derivatives have limits as $x\downarrow 0$ and $x\uparrow b$. 
Therefore, $\Phi'$   can be continuously defined on $y\in[0,b]\cap \mathbb R^+$, as
\begin{equation}
    \Phi'(y)=2\overline{F}(y)-yf(y) 
    \label{eq:Phi'}. 
\end{equation}
On $(0,b)$, $\Phi'$ has right and left-handed derivatives, so let us  define $\Phi''_r$ as the right-handed derivative  
\begin{align}
      \Phi''_r(y)&:=-3f(y)-yf'_r(y)=-yf(y)\left(\frac{3}{y}+\frac{f'_r}{f}(y)\right)\nonumber\\
    \ &=-yf(y) \zeta(y),
    \label{eq:Phi''}
\end{align}
where  
\begin{equation}
    \zeta(y):=3/y+f'_r(y)/f(y)
    \label{eq:phi}
\end{equation}
with $f'_r$ denoting the right-handed derivative of $f$.
  By assumption, $x\to x^3f(x)$ is strictly log-concave,  therefore $\zeta$ is decreasing. 

Let us show that there exists a unique $y_1\in (0,b)$ s.t. $\Phi'(y_1)=0$. For that purpose, consider the following cases:
\begin{enumerate}
    \item [Case i)] Assume that $\lim_{y\to b} \zeta(y)=\ell\ge 0$. Then, by strict log-concavity, $\zeta(y)>\ell\ge 0$ on $I$, i.e.  
       $\Phi''<0$ and  $\Phi'$ decreasing on $[0,b]$. By assumption ii), $\Phi'(0)=2\overline{F}(0)-\lim_{y\to 0} yf(y)=2>0$.
    \begin{enumerate}
        \item [Case ia)] if $b=\infty$, then $\lim_{y\to \infty}\Phi'(y)=-\lim_{y\to \infty} yf(y)=0$ since $\E[X]<\infty$. Thus $\Phi'>0$  and $\Phi$ is increasing on $(0,\infty)$. But since $\Phi(0)=-\E[X]<0$ and $\lim_{y\to\infty} \Phi(y)=0$, $\Phi$ cannot have a zero in $(0,\infty)$, which contradicts the general existence Theorem \ref{thm:existence}. Thus, necessarily, $b<\infty$.
        \item [Case ib)] if $b<\infty$, then $\lim_{y\to b}\Phi'(y)=-bf(b-)\le 0$. 
        If $f(b-)=0$, then $\Phi'>0$ on $(0,b)$, so that $\Phi$ is increasing on $[0,b)$ from $-\E[X]$ up to $0$, which contradicts the existence of $y_*=m_X/2\in [0,b/2)$ such that $\Phi(y*)=0$.
        Therefore, $f(b-)>0$, $\lim_{y\to b}\Phi'(y)< 0$ and  there exists a unique $y_1\in (0,b)$ s.t. $\Phi'(y_1)=0$.
    \end{enumerate}
    \item [Case ii)] Assume that $\lim_{y\to b} \zeta(y)=\ell< 0$.  
    \begin{enumerate}
        \item [Case iia)] If $\lim_{y\to 0}\zeta(y)=\eta\le 0$.  Then, since $\zeta$ is decreasing, $\zeta<0$ on $(0,b)$, hence $\Phi''_r>0$, so that  $\Phi'$ is  increasing on $(0,b)$. But this is clearly impossible, since $\Phi'(0)=2$ and $\lim_{y\to b}\Phi'(y)=0$, resp.  $=-bf(b-)\le 0$, for $b=\infty$, resp. $b<\infty$. Thus, necessarily  $\lim_{y\to 0}\zeta(y)=\eta>0$.
        \item [Case iib)] If $\lim_{y\to 0}\zeta(y)=\eta> 0$. Then, since $\zeta$ is decreasing,  there exists a unique  $y_0\in I$ s.t. $\Phi''_r<0$ on $(0,y_0)$, and $\Phi''_r>0$ on $(y_0,b)$.

    Assume that $\Phi'(y_0)\ge 0$. 
    Then, $\Phi'>0$ on $(0,b)\setminus \{y_0\}$ and $\Phi$ is increasing on $[0,b]$. By the general existence Theorem \ref{thm:existence}, there exists some $y^*=m_X/2\in(0,b)$ s.t. $\Phi(y^*)=0$. Yet, $\lim_{y\to\infty} \Phi(y)=0$ since $\E[X]<\infty$, which is a contradiction. Therefore,  $\Phi'(y_0)< 0$. With the fact that $\Phi'(0)=2>0$ and $\Phi'$ is decreasing on $(0,y_0)$, this implies that  there exists a unique $y_1\in(0,y_0)$ s.t. $\Phi'(y_1)=0$. On $(y_0,b)$,  $\Phi'$ cannot have any zeros, since $\Phi'$ is increasing, $\Phi'(y_0)< 0$ and either  $\Phi'(b-)=-bf(b-)<0$ if $b<\infty$, or $\lim_{y\to \infty}\Phi'(y)=0$ if $b=\infty$, since $\E[ X]<\infty$.   
     \end{enumerate}
\end{enumerate}
So, in both cases, $\Phi$ is increasing on $(0,y_1)$ and decreasing on $(y_1,b)$. Moreover $\Phi(0)=-\E[X]<0$ and $\lim_{y\to b}\Phi(y)=0$, and the general existence Theorem \ref{thm:existence} ensures that $\Phi$ has at least a zero in $(0,b/2)$. Hence, necessarily $\Phi(y_1)\ge 0$ and $\Phi$ has at most one zero, located in $(0,y_1)$.
\end{proof}

In particular, if $f$ is itself log-concave, then the conditions of Theorem \ref{thm:log concave full support} are fulfilled, as shown in the next corollary. 
\begin{corollary}
\label{cor:log concave}
If $f$ is a   log-concave density on $(0,b)$, then the assumptions i) ii)  of Theorem \ref{thm:log concave full support} hold.
\end{corollary}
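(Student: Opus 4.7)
My plan is to verify assumptions (i) and (ii) of Theorem \ref{thm:log concave full support} separately. Assumption (i) reduces to a one-line algebraic observation. Assumption (ii) reduces to the qualitative fact that a log-concave density cannot blow up at the left endpoint of its positivity interval, which is the only mildly delicate point.

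For (i), I would write $\ln \tilde f(x) = 3\ln x + \ln f(x)$ on the positivity set $(0,b)$ of $\tilde f$. Since $x\mapsto 3\ln x$ is strictly concave on $(0,\infty)$ and $\ln f$ is concave on $(0,b)$ by hypothesis, the sum is strictly concave, so $\tilde f$ is strictly log-concave.

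For (ii), the goal $\lim_{x\to 0^+}xf(x)=0$ follows at once from boundedness of $f$ near $0$, so I would establish that boundedness. Setting $g:=\ln f$, concave on $(0,b)$, the concavity inequality applied to any $0<x<x_0<x_1<b$ gives
$$g(x_0)\ge \frac{x_1-x_0}{x_1-x}\,g(x)+\frac{x_0-x}{x_1-x}\,g(x_1).$$
The coefficient of $g(x)$ tends to $(x_1-x_0)/x_1>0$ as $x\to 0^+$; hence if $\limsup_{x\to 0^+}g(x)=+\infty$, the inequality would force $g(x_0)=+\infty$, contradicting the finiteness of $g(x_0)$. Therefore $g$, and hence $f$, is bounded above in a neighbourhood of $0$, which yields $\lim_{x\to 0^+}xf(x)=0$.

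The main obstacle, modest though it is, is that the corollary does not assume $f$ admits a finite limit at $0$, so naive continuity arguments are unavailable. The concavity inequality above handles the situation uniformly, regardless of whether $f(0+)$ is strictly positive, zero, or undefined.
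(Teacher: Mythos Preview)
Your proof is correct and follows essentially the same approach as the paper. For (i) the paper phrases the same observation via the right-derivative $\zeta(y)=3/y+f'_r(y)/f(y)$ being decreasing, and for (ii) the paper uses the midpoint form of the log-concavity inequality (with $x_0$ fixed and $y\to 0$) to reach the same boundedness conclusion you obtain from the three-point inequality.
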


\begin{proof}
\begin{enumerate}
    \item [i)] $x\to 3/x$ is decreasing and $f'_r/f$ is non-increasing, therefore $\zeta$ is decreasing, i.e. $\tilde f:x\to x^3f(x)$ is strictly log-concave.
    \item [ii)] Let $y_0>0$ s.t. $f(y_0)>0$. For $0<y<y_0<b$, one has, by log-concavity, that
    \begin{align*}
    &\ln f\left(\frac{y+y_0}{2} \right)\ge \frac{1}{2}\ln f(y)+\frac{1}{2}\ln f(y_0)\\
    \Longleftrightarrow\quad&    \ln f(y)\le 2\ln f\left(\frac{y+y_0}{2} \right)-\ln f(y_0)\\
    \Longleftrightarrow\quad&    0\le f(y)\le f^2\left(\frac{y+y_0}{2} \right)/ f(y_0)\\
    \Longleftrightarrow\quad&    0\le yf(y)\le yf^2\left(\frac{y+y_0}{2} \right)/ f(y_0),
    \end{align*}
    which implies $\lim_{y\to 0} yf(y)=0$ by continuity. (Note that the statement is trivial if $\lim_{y\to 0} f(y)<\infty$.) 
\end{enumerate}

\end{proof}

\begin{remark}
  Examples of distributions  satisfying the hypotheses of Theorem  \ref{thm:log concave full support}, but not those of Corollary \ref{cor:log concave} (in particular, whose density is not log-concave) are the power distributions
  $$f(x)=(a+1)x^a\mathds 1_{(0,1)}(x),\quad -1<a<0, $$
  and the Gamma distributions
  $$
  f(x)=\frac{x^{a-1}}{\Gamma(a)}e^{-x}\mathds 1_{x>0},\quad 0<a<1.
  $$
\end{remark}

\begin{remark}
If, in addition to being log-concave, the density is non-increasing, a simpler proof of Theorem \ref{thm:log concave full support} can be given as follows:
Let $\tilde X$ be the symmetrized version of $X$, with density $\tilde f(x)=\frac{1}{2}f(|x|)$.
Let $\lambda\in[0,1]$, $x,y\in\mathbb R$. Since $\ln f$ is non-increasing and concave,
$$
\ln f(|\lambda x+(1-\lambda)y|)\ge \ln f(\lambda|x|+(1-\lambda)|y|)\ge \lambda \ln f(|x|)+(1-\lambda)f(|y|),
$$
viz. $\tilde f$ is log-concave. Therefore, by the existence and uniqueness results of \cite{Trushkin82}, \cite{Kieffer1983}, there exists a unique three-points stationary quantizer of $\tilde X$. 

 The optimal magnitude $m_X$ is obtained by deriving the optimal two-points quantizer of $X$ with a point constrained to be zero. $m_X$ must verify the stationary condition 
 \begin{equation}
 \E[(m_X-X)\mathds 1_{X\ge m_X/2}]=0.
 \label{statio m}
 \end{equation}
 On the other hand, a three-points (unconstrained) quantizer $(x_1,x_2,x_3)$ for $\tilde X$ must satisfy the stationary condition $\E[(x-\tilde X)\mathds 1_{A_i}(\tilde X]=0$.  
 Let us show that that $(-m_X,0, m_X)$ is a three-points stationary quantizer for $\tilde X$. 
Indeed, for  $x_1=-m_X$, the stationary condition writes
$\int (-m_X-t)\mathds 1_{ t\le m_X/2}\tilde f(t)dt=0$, which gives (\ref{statio m}) by the change of variable $x=-t$ and the symmetry of $\tilde f$. For $x_3=m_X$, the stationary condition is also (\ref{statio m}) since $m_X>0$. For $x_2=0$, it writes $\int_{-m_X/2}^{m_X/2} x\tilde f(x)dx=0$, which is automatically satisfied since $\tilde f$ is even.
 Therefore, $(-m_X,0, m_X)$ is the unique three points stationary quantizer for $\tilde X$. Thus $m_X$ is unique. 

\end{remark}

\subsection{Discussion and  properties}\label{sec:examples}

The optimal magnitude-propensity pair $(m_X,p_X)$ obtained by (\ref{eq:sol theorem}) or (\ref{eq}), (\ref{eq2})  has several interesting characteristics. From (\ref{eq}), resp. (\ref{eq2}), it is clear that the magnitude $m_X$ can be interpreted either as (twice) a Value-At-Risk $Var_\alpha(X)$, resp.   as an Expected Shortfall $ES_\alpha(X)$, for a special value of $\alpha$:
\begin{equation}
    m_X=ES_{p_X}(X)=2VaR_{1-p_X}(X) .
\end{equation}
It is thus comforting that one obtains, for the magnitude effect, a quantity akin to the classical and well-used univariate risk measures. In addition,  such an interpretation gives an answer to the problem mentioned in Remark \ref{rem:choice} about the ``right'' choice of $\alpha$ for parametrized risk measures like $VaR_\alpha(X)$ or  $ES_\alpha(X)$: the corresponding $\alpha$ is determined by $p_X$, hence by the distribution $P^X$. It is no longer a subjective choice dependent on the user.

The magnitude-propensity paradigm $m_X,p_X$, by quantifying risk on a bivariate scale, is fundamentally different from the traditional approach to measuring risk by a univariate coherent risk measure $\rho$. Hence, it may not make sense to look for properties similar to those of coherent risk measures. However, one has the following   noteworthy properties for the magnitude $m_X$.
\begin{proposition}[positive homogeneity/scaling of the magnitude]
For $a>0$, $m_{aX}=am_X$, and $p_{aX}=p_X$
\end{proposition}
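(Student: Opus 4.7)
The plan is to exploit the fact that positive scaling acts homogeneously on the Wasserstein-2 transport cost from $P^X$ to $\mathcal A_0$. Concretely, since $Q_{aX}(t)=aQ_X(t)$ for $a>0$, the explicit one-dimensional formula (\ref{eq:explicit W2}) applied to the target $(1-p)\delta_0+p\delta_m$ (whose quantile is $m\mathds 1_{t>1-p}$) gives, after reparametrising the magnitude as $m=am'$,
\[
W_2^2\bigl(P^{aX},(1-p)\delta_0+p\delta_{am'}\bigr)=a^2\,W_2^2\bigl(P^X,(1-p)\delta_0+p\delta_{m'}\bigr).
\]
Since $a^2>0$ and the map $(m,p)\mapsto(m/a,p)$ is a bijection of $(0,\infty)\times(0,1)$, the argmin on the left in $(m,p)$ is the image, under $(m',p)\mapsto(am',p)$, of the argmin on the right in $(m',p)$. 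Invoking the definition (\ref{def}) of $(m_X,p_X)$, this yields at once $m_{aX}=am_X$ and $p_{aX}=p_X$.

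As a sanity check, the same identity can be read off from the characterization (\ref{eq:condition theorem}) of Theorem \ref{thm:existence}: if $t:=m_X/2$ solves $2t=\E[X\mid X>t]$, then pulling the scalar $a>0$ inside the conditional expectation gives $2(at)=\E[aX\mid aX>at]$, so that $at$ is an optimal threshold for $aX$. Hence $m_{aX}=2at=am_X$ and $p_{aX}=P(aX>at)=P(X>t)=p_X$.

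There is essentially no obstacle beyond checking the legality of the change of variable $m=am'$, which is immediate since $a>0$ keeps $m$ in $(0,\infty)$. The proof is a short computation rather than a structural argument, and both routes (Wasserstein rescaling or the first-order characterization) deliver the conclusion in a couple of lines.
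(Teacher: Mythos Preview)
Your proof is correct and follows essentially the same scaling argument as the paper: both factor out $a^2$ from the objective, the paper via the distortion expression $\E[\min\{(aX)^2,(aX-m)^2\}]=a^2\E[\min\{X^2,(X-m/a)^2\}]$ and you via the quantile representation $Q_{aX}=aQ_X$ in (\ref{eq:explicit W2}). Your additional sanity check through the fixed-point equation (\ref{eq:condition theorem}) is a nice complement not present in the paper's proof.
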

\begin{proof}
    From the expression (\ref{eq:distortion centers}) of the distortion as a function of the centers, 
    $$
    W_2^2(P^{aX},P^Y)=\E[\min\{aX^2,(aX-m)^2\}]=a^2 \E[\min\{X^2,(X-m/a)^2\}].
    $$
    Therefore $\frac{m_{aX}}{a}=m_X$, and $p_{aX}=p_X$.
\end{proof}
Recall the definition of the convex order:  $X\le_{cx} Y$ if $\E \phi(X)\le \E\phi(Y)$, for all integrable, convex functions $\phi$, see e.g. \cite{r2013}. One  has:
\begin{proposition}[Monotonicity of the magnitude]
If $X\le_{cx} Y$, then $m_X\le m_Y$.
\end{proposition}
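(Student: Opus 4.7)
The plan is to use the optimal quantization formulation from Theorem \ref{thm:existence}: $m_X$ is the argmax of the functional $H_X(m):=\E[(2mX-m^2)_+]=2m\,\pi_X(m/2)$ for $m>0$, equivalently the unique positive root of $\E[(X-m)\mathds 1_{X>m/2}]=0$. The crucial remark is that for each fixed $m>0$, the integrand $x\mapsto(2mx-m^2)_+=2m(x-m/2)_+$ is a convex function of $x$, so the defining property of convex order immediately yields $H_X(m)\le H_Y(m)$ for every $m>0$.

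From this pointwise domination I would pass to a comparison of the argmaxes through a revealed-preference manipulation: combining $H_X(m_X)\ge H_X(m_Y)$ and $H_Y(m_Y)\ge H_Y(m_X)$ and dividing gives
\[
\frac{\pi_Y(m_X/2)}{\pi_X(m_X/2)}\;\le\;\frac{\pi_Y(m_Y/2)}{\pi_X(m_Y/2)},
\]
so it suffices to establish that $a\mapsto\pi_Y(a)/\pi_X(a)$ is non-decreasing on the relevant range --- for then the displayed inequality forces $m_X\le m_Y$. Direct differentiation shows this monotonicity is equivalent to the pointwise mean-excess inequality $e_X(a)\le e_Y(a)$; combined with the first-order condition $e_X(a_X)=a_X$ coming from Theorem \ref{thm:existence}, the problem finally reduces to establishing $\pi_Y(a_X)\ge a_X\,\overline{F}_Y(a_X)$ where $a_X=m_X/2$.

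The main obstacle is precisely this last step, since convex order does not in full generality entail the pointwise MRL-type ordering $e_X\le e_Y$. I expect the argument to close by exploiting the specific critical value $a=a_X$: decomposing $(y-2a_X)\mathds 1_{y>a_X}=(y-a_X)_+-a_X\mathds 1_{y>a_X}$, applying $\le_{cx}$ to the convex piece $(y-a_X)_+$, and then invoking the FOC for $X$, namely $\E[(X-a_X)_+]=a_X\,\overline{F}_X(a_X)$, should reduce the needed inequality to a stop-loss comparison that follows from the structural properties of convex order. Equivalently, this amounts to showing $H_Y'(m_X)\ge 0$, which combined with the unimodality of $H_Y$ arising from the uniqueness analysis underlying Theorem \ref{thm:log concave full support} delivers $m_Y\ge m_X$.
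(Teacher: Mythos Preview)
Your proposal does not close. After the revealed-preference reduction you arrive at the single inequality $e_Y(a_X)\ge a_X$ (equivalently $\pi_Y(a_X)\ge a_X\,\overline{F}_Y(a_X)$, i.e.\ $H_Y'(m_X)\ge0$). But the decomposition you propose yields only $\pi_Y(a_X)\ge\pi_X(a_X)=a_X\,\overline{F}_X(a_X)$: applying $\le_{cx}$ to the convex piece $(y-a_X)_+$ compares $\pi_Y(a_X)$ to $\pi_X(a_X)$, and the FOC for $X$ rewrites the latter as $a_X\,\overline{F}_X(a_X)$, which says nothing about $a_X\,\overline{F}_Y(a_X)$ since convex order gives no control on the survival function at a fixed level. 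So the obstacle you yourself flag --- that $\le_{cx}$ does not imply the pointwise MRL ordering $e_X\le e_Y$ --- is not overcome by your sketch; and falling back on unimodality of $H_Y$ via Theorem~\ref{thm:log concave full support} would import a log-concavity hypothesis on $Y$ that the proposition does not assume.

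The paper's proof sidesteps the whole detour by citing an external characterization (Proposition~1 in \cite{Puccetti2013}): $X\le_{cx}Y$ if and only if $\E[X\mid X>a]\le\E[Y\mid Y>a]$ for all $a\ge0$. In other words, the mean-excess comparison you were trying to derive is taken as an \emph{equivalent} formulation of convex order. Once the curve $a\mapsto\E[X\mid X>a]$ lies below $a\mapsto\E[Y\mid Y>a]$, the respective intersections with the line $a\mapsto2a$ --- which by \eqref{eq:condition theorem} are $m_X/2$ and $m_Y/2$ --- are ordered, and the result follows in one line. Your $H_X$, revealed-preference and unimodality machinery becomes unnecessary once that equivalence is available: it \emph{is} the missing ingredient.
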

\begin{proof}
    By e.g. Proposition 1 in \cite{Puccetti2013}, 
    $X\le_{cx} Y$ 
    if and only if $\E [X|X>a]\le \E[Y|Y>a]$, for all $a\ge 0$. Hence, the curve 
    $a\mapsto \E[X|X>a]$ is below the curve $a\mapsto \E[Y|Y>a]$. Thus, the optimal threshold  $a_X$ for $X$, satisfying (\ref{eq:condition theorem}), is lower than the corresponding optimal threshold $a_Y$ for $Y$. Therefore, $m_X\le m_Y$. 
\end{proof}

Eventually, the magnitude effect will be larger than the average of $X$, which seems a desirable property from the point of view of the classical premium calculation principles \cite{Buhlmann1996}:
\begin{proposition}\label{prop:EX}
If $P^X$ is absolutely continuous, then $m_X\ge EX$.
\end{proposition}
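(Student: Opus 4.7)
The plan is to invoke the characterization $m_X = \E[X \mid X > m_X/2]$ from Theorem \ref{thm:existence}, and then to observe that the conditional tail expectation $c \mapsto \E[X \mid X > c]$ is non-decreasing in $c \ge 0$ when $X \ge 0$. Since $m_X/2 \ge 0$ and, under absolute continuity of $P^X$ we have $P(X=0)=0$ so that $\E[X \mid X > 0] = \E[X]$, chaining these two facts yields $m_X \ge \E[X]$.

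The core step is the monotonicity of $g(c) := \E[X \mid X>c]$ on $[0,\infty)$. Writing $g(c) = \int_c^\infty t f(t)\,dt / \overline F(c)$ and differentiating, one obtains
$$
g'(c) \;=\; \frac{f(c)}{\overline F(c)^2}\int_c^\infty (t-c) f(t)\,dt \;\ge\; 0,
$$
since the integrand is non-negative on $(c,\infty)$. Hence $g$ is non-decreasing on the interior of the support. A distribution-free alternative (avoiding any smoothness issue of $g$) is the elementary inequality
$$
\E[X\mathds{1}_{X>c}]\, P(X\le c) \;\ge\; c\, P(X>c)\, P(X\le c) \;\ge\; \E[X\mathds{1}_{X\le c}]\, P(X>c),
$$
which rearranges to $\E[X\mid X>c] \ge \E[X]$ directly for any $c\ge 0$ and any non-negative $X$.

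Putting it together: by Theorem \ref{thm:existence}, $m_X = \E[X \mid X > m_X/2]$; applying the monotonicity (or the direct inequality above) with $c = m_X/2 \ge 0$ gives $m_X \ge \E[X \mid X > 0] = \E[X]$, where the last equality uses $P(X=0)=0$. There is no real obstacle here once Theorem \ref{thm:existence} is available; the only minor point to check is that absolute continuity guarantees $\overline F(0)=1$ so that the lower bound at $c=0$ really equals $\E[X]$ and not a larger quantity (the inequality $m_X \ge \E[X]$ would in fact continue to hold without absolute continuity, merely becoming $m_X \ge \E[X]/P(X>0)$, but the stated hypothesis makes the cleanest statement).
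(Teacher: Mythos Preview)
Your proof is correct and follows essentially the same route as the paper: both arguments show that the conditional tail expectation $c\mapsto \E[X\mid X>c]$ is non-decreasing by differentiating the quotient $\overline{K}(c)/\overline{F}(c)$, then combine this with the characterization $m_X=\E[X\mid X>m_X/2]$ and the fact that the value at $c=0$ is $\E[X]$. Your additional distribution-free inequality $\E[X\mathds 1_{X>c}]P(X\le c)\ge \E[X\mathds 1_{X\le c}]P(X>c)$, which directly yields $\E[X\mid X>c]\ge \E[X]$ without any smoothness of $f$, is a nice bonus not present in the paper.
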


\begin{proof}
Let $\overline{F}(x)=P(X\ge x)$, $\overline{K}(x)=\E [X\mathds 1_{X\ge x}]=\int_{(x,\infty)}t f(t)dt$ and $\beta(x)=\overline{K}(x)/\overline{F}(x)$. Then, 
$$
\beta'(x)=\frac{-xf(x)\overline{F}(x)+f(x)\overline{K}(x)}{\overline{F}^2(x)}=\frac{f(x)}{\overline{F(x)}}\left(
\frac{\overline{K}(x)}{\overline{F}(x)}-x\right)\ge 0
$$
as $\overline{K}(x)\ge x\overline{F}(x)$. Therefore, $\beta$ is non-decreasing, hence $\beta(x)\ge \beta(0)=\E[ X]$. Therefore, the solution $m_X$ of the equation $x=\beta(x/2)$ satisfy $m_X\ge \E[X]$.
\end{proof}


	




\section{Magnitude-propensity risk comparison: numerical illustrations and empirical aspects }
\label{sec:section3}
\subsection{Risk comparison with magnitude-propensity plots}
A fundamental objective of risk analysis is the comparison of risks. The proposed approach allows to compare risks on both the magnitude and propensity scales,  by displaying a point of coordinates $(m_X,p_X)$ on the  magnitude and propensity axes. Figure  \ref{fig:mp ex1} illustrates the resulting magnitude-propensity plot, for the uniform, exponential and Pareto distributions of Examples \ref{ex:uniform}-\ref{ex:pareto}.
\begin{figure} 
\begin{center}
  \includegraphics[scale=0.8]{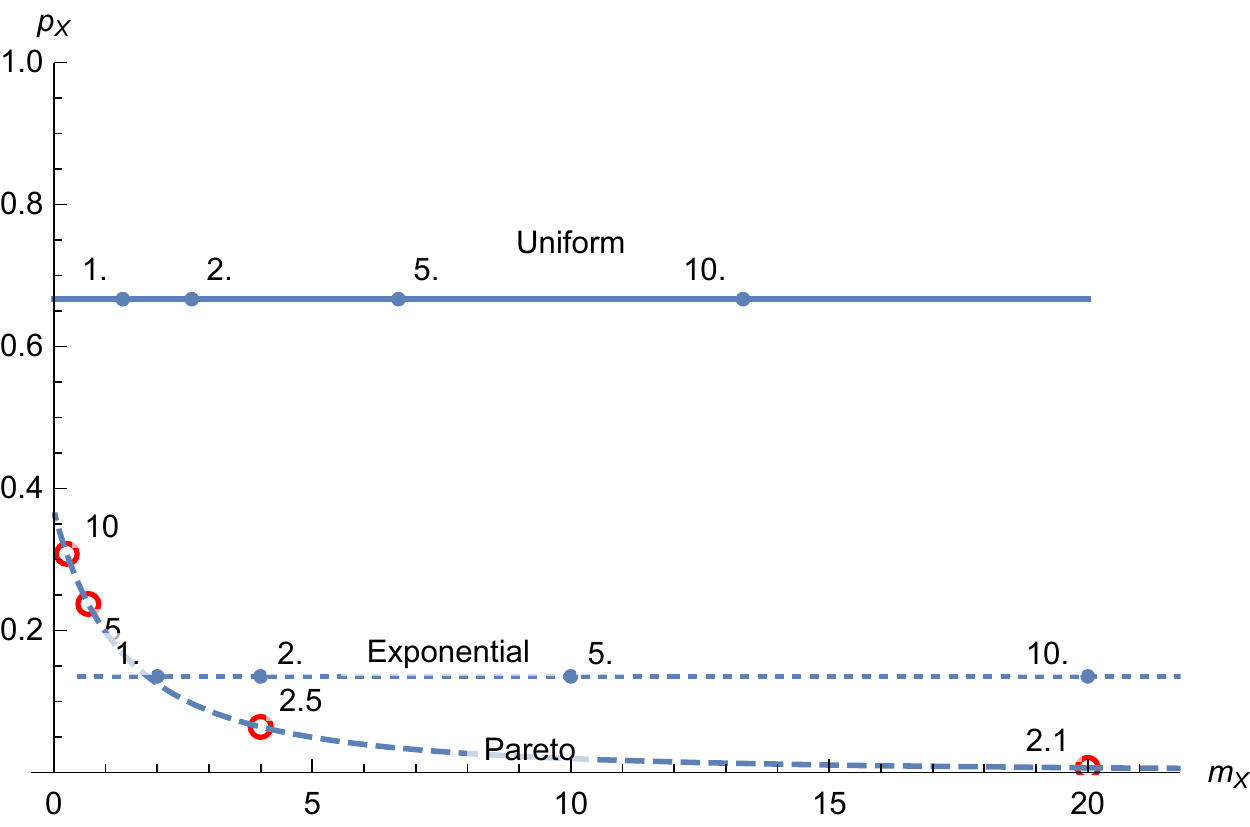}
	\caption{Magnitude-propensity plots for the uniform $U_{[0,a]}$ distribution (solid line), Exponential $Exp(\lambda)$ (dotted) and Pareto $Pa(\theta)$ (dashed), for varying values of the parameter $a, \lambda, \theta$.}
	\label{fig:mp ex1}
	\end{center}
\end{figure}
Both the uniform and exponential distribution have constant propensity along the parameter, and a magnitude linearly increasing with the mean of the distribution. The blue dots show the $(m_X,p_X)$ points for a distribution with mean $\E[X]=1$, $2$, $5$, and $10$: the uniform distribution has a higher propensity, with a lesser magnitude effect.
For the Pareto distribution, the red circles show the $(m_X,p_X)$ points with parameter value $\theta=2.1$, $2.5$, $5$, $10$: the tail effect is reflected in the behavior of the magnitude-propensity pair. For a heavy-tailed distribution ($\theta$ close to $2$), one has a large magnitude with a very small propensity, while for a short-tailed distribution ($\theta$ large), the magnitude effect remain limited but with a larger propensity. This behaviour is in accordance with what could be expected from intuition.

\subsection{Distributions with no closed form expressions for $(m_X,p_X)$}
For some more complicated distributions, one may not have a closed form expression of the $(m_X,p_X)$ as in Examples \ref{ex:uniform}-\ref{ex:pareto}. However, one can use the  computation capabilities of, e.g.,  Mathematica \cite{Mathematica} to obtain a closed-form expression of the $\E [X|X>a]$ and then find numerically the root of (\ref{eq:condition theorem}). We illustrate the procedure for the Gamma and Weibull distributions, two distributions frequently encountered in insurance mathematics.

\begin{example}[Gamma distribution]
\label{ex:gamma}
    For $X\sim \Gamma(\alpha,\beta)$, its density writes $f(x)=x^{\alpha-1}e^{-(x/\beta)}$.  One has 
    $$\E [X|X>a]=\beta\frac{\Gamma(1+\alpha,a/\beta)}{\Gamma(\alpha,a/\beta)}
    $$
    where $\Gamma$ is the incomplete Gamma function. We take a family of $\Gamma(\alpha,2)$ distribution, with shape parameter $\alpha$ varying from $0.1$ to $2.9$ by stepsize of $0.2$. We solve (\ref{eq:condition theorem}) numerically  with initial point $\E [X]$. The resulting magnitude-propensity plot is displayed in Figure \ref{fig:mp gamma}.
    \begin{figure} [H]
    \begin{center}
  \includegraphics[scale=0.8]{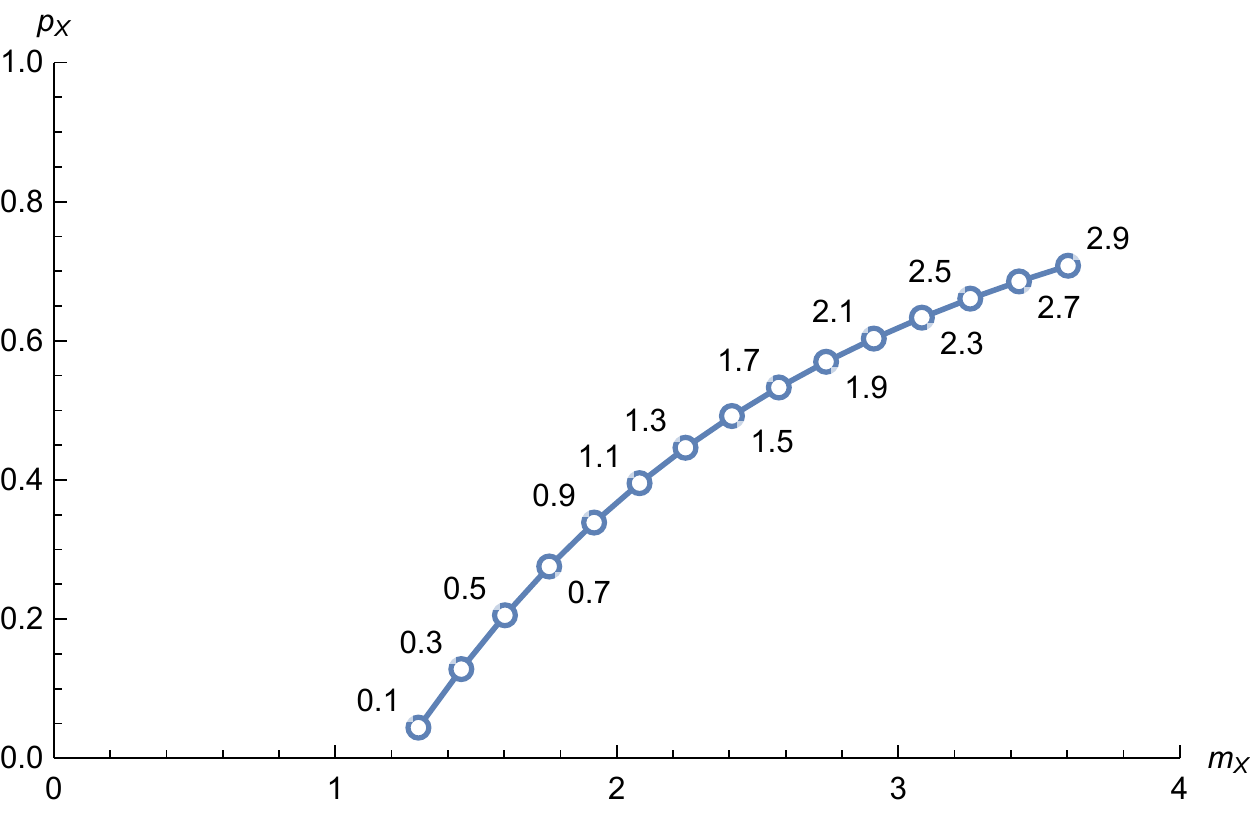}
	\caption{Magnitude-propensity plot  for Gamma distribution $X\sim \Gamma(\alpha,2)$, for $\alpha$ varying from  $0.1$ to $2.9$. The corresponding mean $\E[X]$ is indicated near the circled $(m_X,p_X)$ points.}
	\end{center}
	\label{fig:mp gamma}
\end{figure}
The coordinates of the circled points indicate the magnitude-propensity $(m_X, p_X)$ values and are labeled with the expectation $\E [X]$. Here, it is interesting that both $m_X$ and $p_X$ are increasing with the mean parameter, contrary to the Pareto case of Figure \ref{fig:mp ex1}. This corresponds to intuition: let us recall that as the shape parameter $\alpha$ increase, the density $f(x)$ is shifted to the right and  shrinks toward zero for small $x$ values, see Figure \ref{fig:gammapdf}. Hence, both the magnitude and propensity coding the risk borne by $X$ will increase.
   \begin{figure}[H]
   \begin{center}
  \includegraphics[scale=0.6]{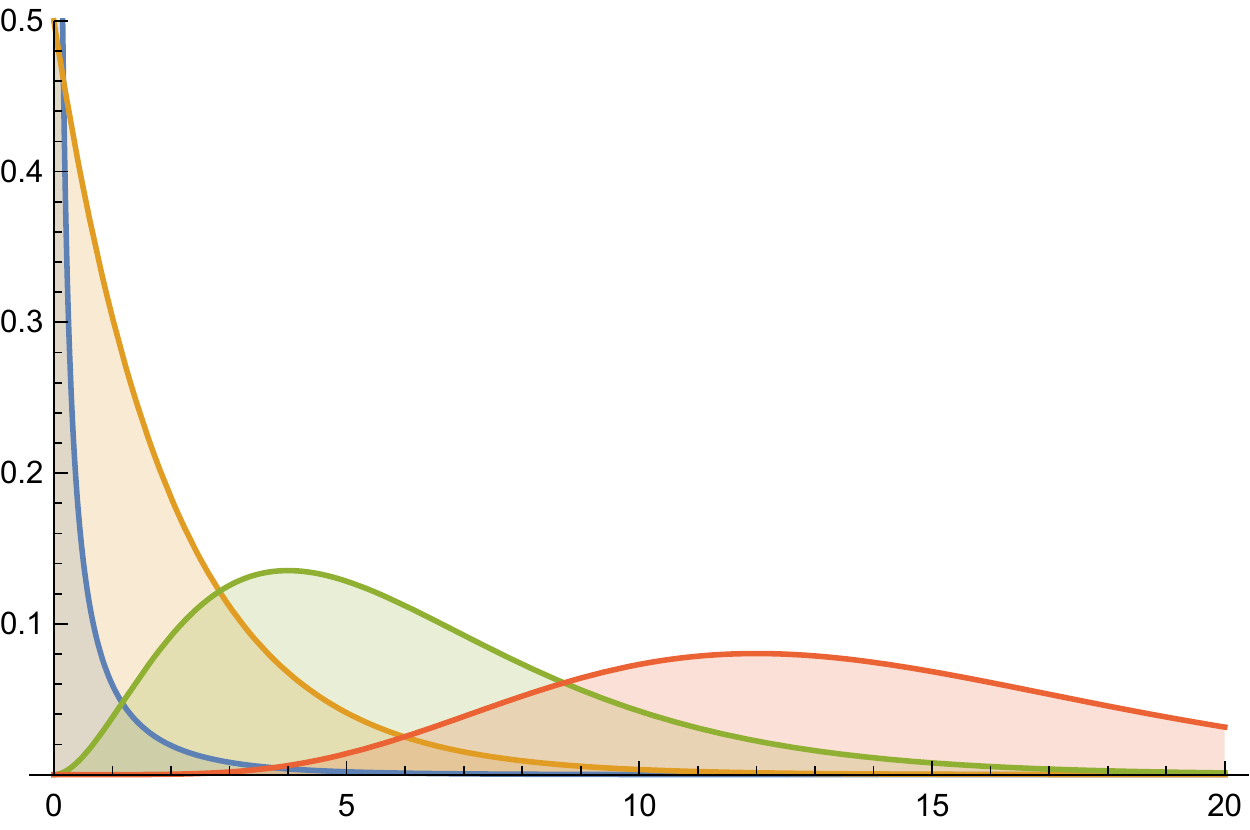}
	\caption{Probability density functions $f$  for Gamma distributions $\Gamma(\alpha,2)$, with $\alpha=0.1$ (blue), $1$ (orange), $3$ (green), $7$ (red). }
	\label{fig:gammapdf}
	\end{center}
\end{figure}
\end{example}

\begin{example}[Weibull distribution]
\label{ex:weibull}
    For $X\sim W(\alpha,\beta)$, its density writes $f(x)=x^{\alpha-1}e^{-(x/\beta)^\alpha}$.  One has 
    $$\E [X|X>a]=\beta e^{a^\alpha \beta^{-\alpha}}\Gamma(1+1/\alpha,a^\alpha\beta^{-\alpha})
    $$
    where $\Gamma$ is the incomplete Gamma function. The effect of shifting the parameters is illustrated in Figure \ref{fig:gammapdf2} below.
    
     \begin{figure}[H]
     \begin{center}
           \includegraphics[scale=0.6]{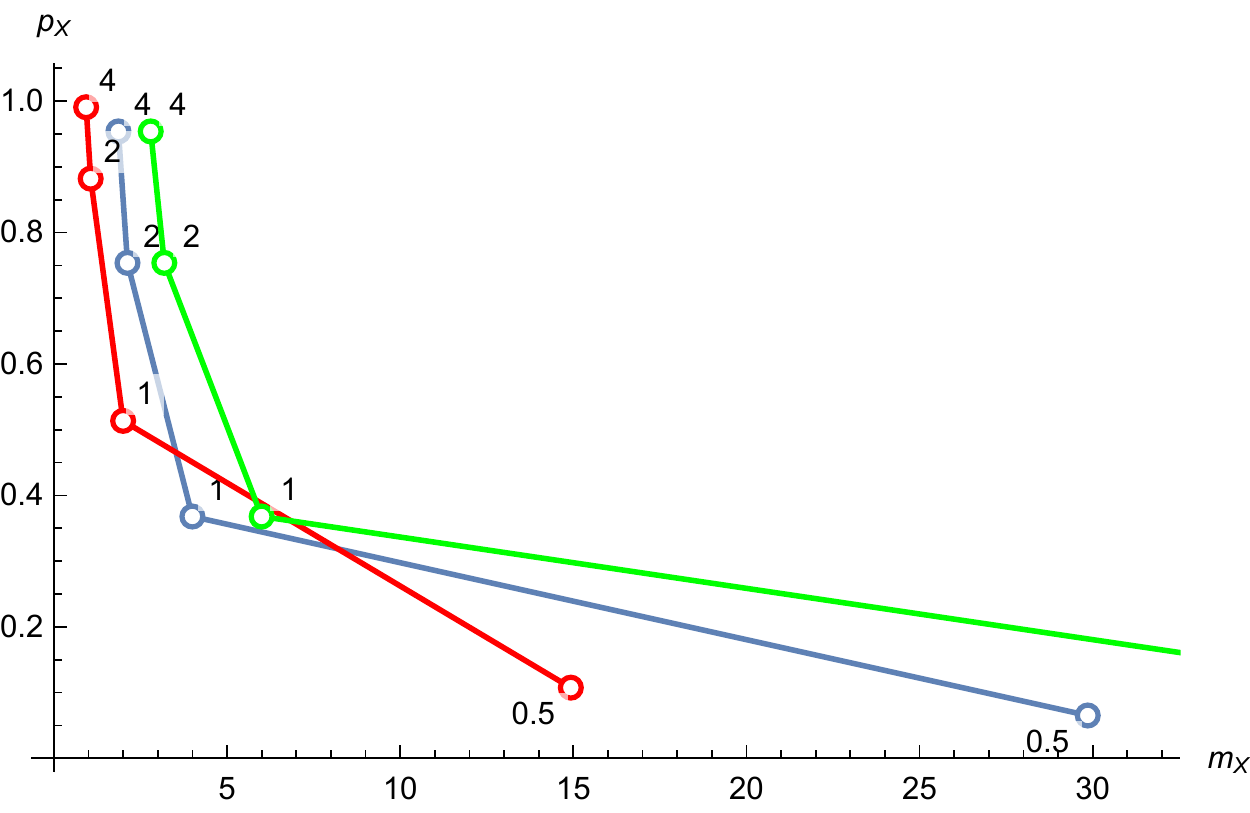}
	\caption{Magnitude-propensity plot  for a Weibull $W(\alpha,\beta)$ distribution for $\beta=2$ (blue), $\beta=1.5$ (red), $\beta=3$ (green). The magnitude-propensity $(m_X, p_X)$ points  are labeled by the values of $\alpha=0.5$, $1$, $2$, $4$.}
	\label{fig:gammapdf2}
     \end{center}
\end{figure}
\end{example}


\subsection{Empirical computations and illustrations}
The characterization (\ref{eq:condition theorem}) shows that the optimal threshold is a fixed point of the function
\begin{equation}
 a\mapsto \frac{\E[X|X\ge a]}{2}.
\label{eq:fixed}   
\end{equation}
 For empirical data, i.e. when the distribution of $X$ is unknown but one has a sample of realizations of $X$, one can replace the unknown expectation in (\ref{eq:fixed}) by a sample estimator. For the empirical measure, one obtains an 
estimate of $(m_X,p_X)$ as a fixed point of $$
a\mapsto \frac{\sum_{i=1}^n X_i\mathds 1_{X_i>a}}{2\sum_{i=1}^n \mathds 1_{X_i>a}}. 
$$
A Mathematica code is given in Appendix \ref{sec:code}.
The latter would correspond to   Lloyd's algorithm in the unconstrained quantization problem. See \cite{Pages2018} for discussion of numerical optimal quantization. 
Alternatively, one could seek directly for a (global) minimizer of the objective function $L$ of (\ref{eq:objective}), using a generic minimizer program (e.g. the command ``NMinimize'' in \cite{Mathematica}).

These approaches are illustrated on a synthetic and   a real dataset. For the synthetic data set, we simulated a sample of $100$ i.i.d. Uniform on $[0,1]$ random variables. From Example \ref{ex:uniform}, we known that we should obtain $(m_X,p_X)=(2/3,2/3)$. LLoyd's algorithm on the sample gives $$(m_X,p_X)\approx (0.67,0.65).$$ NMinimize, which finds for a global minimum, also finds the same $m_X\approx 0.67$.

For the real data set, we take the US Hurricane Losses data, which  is an example data set in \cite{Mathematica}. It reports the thirty most destructive hurricanes in the U.S., from $1949$ to $1999$, see Figure \ref{fig:hurricane}.
\begin{figure} [H]
\begin{center}
      \includegraphics[scale=0.6]{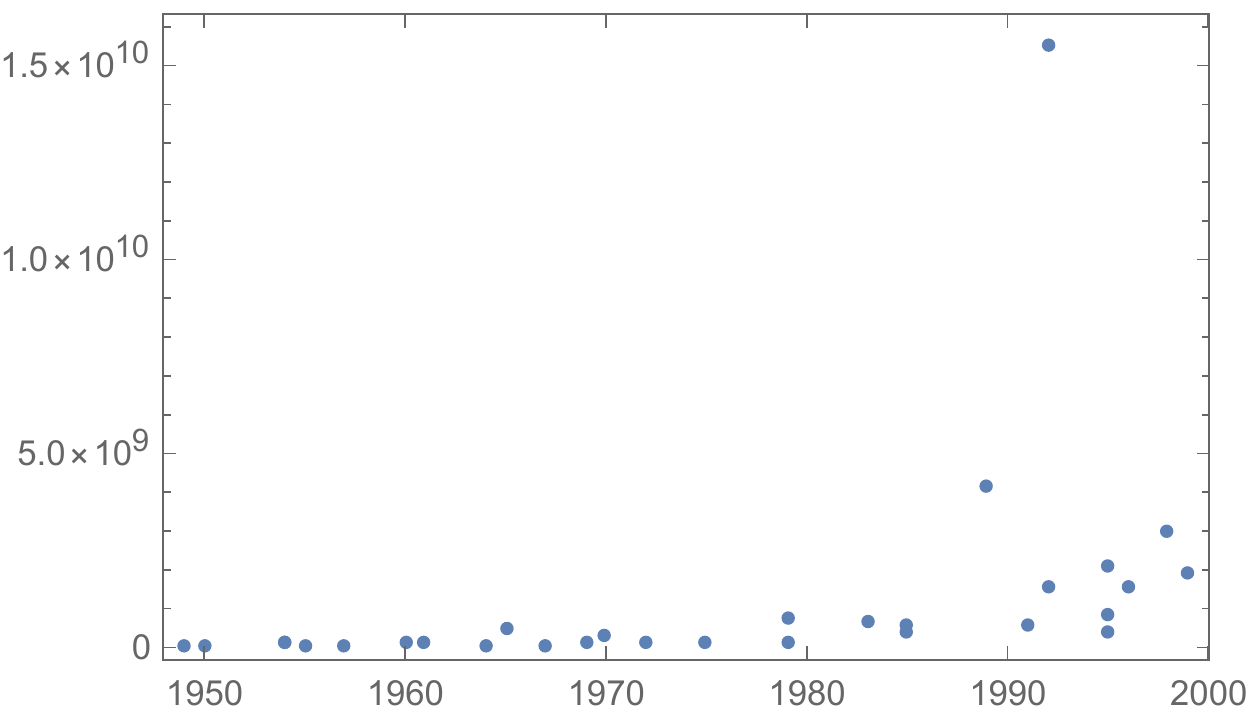}
	\caption{US Hurricane Losses data (Reported losses) in the U.S. $1949--1999$. Available at 
    https://datarepository.wolframcloud.com/resources/US-Hurricane-Loss.
}
\end{center}
	\label{fig:hurricane}
\end{figure}
We standardized the original data by dividing each entry by $10^6$. There is one clear outlier at $15500$ (the famed ``Andrew''  hurricane). It is noteworthy that Lloyd's algorithm on the full dataset give
$$(m_X,p_X)\approx (15500,0.0333),$$
i.e. the outlier value as magnitude, with propensity one over the number of sample points ($1/30$): the outlier has a dwarfing effect on the whole data set and this is reflected on the magnitude-propensity measure.
If the outlier value is removed, one gets
$$(m_X,p_X)\approx (2401.67, 0.206897),$$
which better reflects the magnitude and propensity effects of the data set.

\section{Conclusion}
\label{sec:conclusion}
\subsection{Summary}
In this paper, we introduce the magnitude-propensity risk measures $(m_X, p_X)$, as a way to simultaneously   quantify  both  the  severity and  the  propensity of a risk $X$. Introduced as a particular  mass transportation problem in the Wasserstein metric $W_2$ of the law  of  $X$ to a two-points $\{0,m_X\}$ 
discrete distribution with mass $p_X$ at $m_X$, it generalizes $M-$functionals, in particular the traditional risk measures like Var and Expected Shortfall. The key observation is to view the proposed approach as a constrained optimal quantization problem, with a fixed null center. This allows to parametrize the problem with a single parameter, the optimal threshold determining the Voronoi regions, and to derive it as a solution of a fixed point equation. General existence and characterization results are obtained using B-derivatives, and sufficiency conditions for uniqueness are given.
The obtained magnitude $m_X$ has interesting interpretations in terms of classical risk measures, as  (twice) a Var or as an Expected Shortfall. In addition, it has noteworthy properties, like  positive homogeneity,  monotonicity w.r.t. convex order and being larger than the mean. 

Visualization and comparison of risks can be done on magnitude-propensity plots, which allow for an informative comparison of risks. The effect of tails, shift in the density  and outliers is reflected in the $(m_X,p_X)$ pair. Empirically, the pair can be estimated e.g. by using a variant of Lloyd's algorithm for optimal quantization, or by direct global minimisation.
This novel paradigm of visualizing and comparing risk on the bivariate magnitude-propensity scale offers a broader perspective on risk assessment and evaluation.

\subsection{Perspective: towards general risk quantization}
\label{sec:risk quantization}
To make the presentation   clear,  we focus  in this paper on the simplest way to quantify risk on the magnitude and propensity scale, i.e. on the basic idea (\ref{def})  with a constrained two-points quantizer $\{0,m_X\}$.
This view of measuring risk via a constrained  optimal quantization problem naturally suggests to consider several variants and extensions.  To stimulate further research on the topic, we conclude the paper with a brief sketch below of these possible variants and extensions. 

In all these variants, the rationale is to have a discrete proxy which summarizes the distribution of $X$ and has an intuitive interpretation.
The question of which variant is more sensible from a risk perspective is somehow partially a subjective issue, and is thus left to the appreciation of the reader for the application at hand. What matters is that risk comparisons between several risks be performed within the same framework.

\begin{enumerate}
    \item [(a)] Variant: Mean standardization / Moderate-Large risks.
    
In view of the fact that
$$
W_2^2(P^X,P^Y)=W_2^2(P^{X-\E[X]},P^{Y-\E[Y]})+(\E[X]-\E[Y])^2,
$$
it would make sense to have a discrete proxy $Y$ which has the same mean as $X$.

 One could also standardize differently by the mean, by simply subtracting it. More precisely, one could quantize $X-\E[X]$ to a two-point distributions, $P^Y(.)=p\delta_{m_1}(.)+(1-p)\delta_{m_2}(.)$, with $m_1<m_2$. This allows to summarize the distribution of $X$ into a mean effect $\E[X]$, which itself decomposes into a ``moderate risk'' of magnitude $m_1+\E[X]$ and propensity $p_1$, and a ``large'' risk of magnitude $m_2+\E[X]$ and propensity $p_2$. This gives a quintuplet of descriptive statistics, which summarizes the characteristics of the distribution and can be thought as an alternative to Tukey's box plot.
 
 \item [(b)] Variant: three-points quantification.
 
 Also, one can refine our proxy by looking for a quantization to a discrete distribution with more than two points. For example, another way to  refine one's measure of risk   into a ``moderate'' risk and a ``large'' risk (or, say, ``Low'' risk and ``Tail'' risk), is to use directly a three points discrete measure, $P^Y=(1-p_1-p_2)\delta_0+p_1\delta_{m_1}+p_2\delta_{m_2}$, with $m_1<m_2$.
 With a such three points discrete measure,  one can  encode and quantify  both moderate, resp.  large risk,   in   the magnitude and propensity scale with  $(m_1,p_1)$, resp. $(m_2,p_2)$.

 Such a simultaneous quantization  of the ``mild'' and ``extreme'' risk parts of an insurance risk $X$  could reveal to be very  valuable in the field of Reinsurance, see e.g. \cite{albrecher2017reinsurance}.    The insurer and reinsurer have to agree on an optimal risk sharing policy, with the insurer usually ceding to the reinsurer the ``extreme'' part of the risk (the one with high magnitude and low propensity), while keeping and managing the ``mild'' part. The proposed risk quantization approach by magnitude and propensity seems to be particularly well-suited for this task.
 
 \item [(c)] Extension to financial risk.
 
 For $X$ real-valued (i.e. we take the financial mathematics convention, with $X\ge 0$ standing for a gain and $X<0$ for a loss), one can consider for example  a three-points risk quantization with the following distribution,
$$P^Y(.)=p_-\delta_{m_-}(.) +(1-p_+-p_-)\delta_{\E[X]}(.)+p_+\delta_{m_+}(.),$$
with $m_-\le \E[X]\le m_+$
or with
$$P^Y(.)=p_-\delta_{m_-}(.) +(1-p_+-p_-)\delta_{0}(.)+p_+\delta_{m_+}(.),$$
with the constraint $\E [X]=\E[Y]$, $m_-\le 0\le m_+$.
This allows to summarize the gain/loss of the distribution on both the magnitude and propensity scales.

\item [(d)] Multivariate risk.

Multivariate generalisations to  risk vectors $\mathbf X\in\mathbb R^d$ are similar in spirit, although the computations are less explicit. One can somehow reduce to the scalar case by considering the risk quantization of a portfolio vector $\boldsymbol{\beta}^T \mathbf X$, for $\boldsymbol{\beta}$ in unit simplex,  see e.g.  \cite{rockafellar2002conditional}, \cite{UR2000}.

\item [(e)] Covariates.

Another extension is to take into account the effect of covariates $\mathbf X$ on a loss variable $Y$,  by quantizing the risk of the conditional distribution $Y|\mathbf X$, (or a linear approximation thereof, as in quantile regression). Note that a related, but distinct, approach occurs in Frequency-Severity models (see e.g. \cite{Frees2010} Chapter 16): there, the loss distribution has a large proportion of zeros, and the modeling is done in two parts, one for the frequency of zeros, and the other for the severity. Among other, models include the Tobit model \cite{tobin1958}, with a censored latent variable and the individual risk model, see \cite{Frees2010}. 
\end{enumerate}

\section*{Acknowledgments}
Olivier P. Faugeras acknowledges funding from ANR under grant ANR-17-EURE-0010 (Investissements d’Avenir program).

\noindent{{\it Declaration of interest:}  none.
}

\section*{Appendix: Mathematica code}
\label{sec:code}
Empirical computation of $(m_X,p_X)$   by Lloyd's Method:
\begin{verbatim}
  (*Input: data= dataset; 
          x0=starting value of the threshold search (take the mean
          of the data set for  example)
    Ouput: {m_X,p_X} 
  *)    
    mplloyd[data_, x0_] := 
    Module[{a}, 
        iteratefunction[a_] := Mean[Select[data, # > a &]]/2; 
        a = FixedPoint[iteratefunction[#] &, x0];
        {2 a, NProbability[x > a, x \[Distributed]
        EmpiricalDistribution[data]]}
          ]
    (* one could also have computed p_x by
    1-CDF[EmpiricalDistribution[data],a] *)
\end{verbatim}


\printbibliography
\end{document}